
\documentclass[letterpaper, 10 pt, conference]{ieeeconf}  

\IEEEoverridecommandlockouts                              

\overrideIEEEmargins                                      



\usepackage{amsmath} 
\usepackage{amssymb}  

\usepackage{subcaption}
\usepackage{graphicx}

\usepackage{amsthm}
\theoremstyle{definition}
\newtheorem{proposition}{Proposition}
\theoremstyle{definition}

\theoremstyle{definition}

\theoremstyle{definition}
\newtheorem{remark}{Remark}
\theoremstyle{definition}

\theoremstyle{definition}
\newtheorem{theorem}{Theorem}
\newtheorem{assumption}{Assumption}

\usepackage{cite}
\usepackage{balance}

\usepackage{color}


\graphicspath{ {figs/} }

\title{\LARGE \bf
Decentralized temperature and storage volume control in multi-producer district heating*
}

\author{ Juan E. Machado$^{1}$, Joel Ferguson$^{2}$, Michele Cucuzzella$^{3,1}$ and Jacquelien M.~A.~Scherpen$^{1}$ 
\thanks{*This research received funding from the Dutch Research Council (NWO), ERA-Net Smart Energy Systems and 
European Union's Horizon 2020 research and innovation programme grant no.~775970.}
\thanks{$^{1}$J. E. Machado, M. Cucuzzella  J. M. A. Scherpen are with the Jan C. Willems Center for Systems and Control, ENTEG, Faculty of Science and
Engineering, University of Groningen, Nijenborgh 4, 9747 AG Groningen, the
Netherlands {\tt\small j.e.machado.martinez(j.m.a.scherpen)@rug.nl}}%
\thanks{$^{2}$Joel Ferguson is with the School of Engineering, The University of Newcastle, Australia {\tt\small joel.ferguson@newcastle.edu.au}}%
\thanks{$^{3}$M. Cucuzzella is also with the Department of Electrical, Computer and Biomedical Engineering, University of Pavia, via Ferrata 5, 27100 Pavia, Italy {\tt \small michele.cucuzzella@unipv.it}}
}

\begin{document}

\maketitle
\thispagestyle{empty}
\pagestyle{empty}

\begin{abstract}
Modern district heating technologies have a great potential to make the energy sector more flexible and sustainable due to their capabilities to use energy sources of varied nature and to efficiently store energy for subsequent use. Central control tasks within these systems for the efficient and safe distribution of heat refer to the stabilization of overall system temperatures and the regulation of storage units state of charge. These are challenging goals  when the networked and nonlinear nature of district heating system models is taken into consideration. In this letter, for district heating systems with multiple, distributed heat producers, we  propose a decentralized control scheme to provably meet said tasks stably. 
\end{abstract}
\smallskip
{\small \bf
{\em Index terms}: Stability of nonlinear systems;  Lyapunov methods; Energy systems.
}

\section{Introduction}

District heating systems (DHSs) distribute heat from heating stations, conventionally a single combined heat and power plant, towards clusters of consumers within a neighborhood or a small city using a network of underground insulated pipelines \cite{Lund2014}. The newest generation DHSs have the potential to improve the sustainability of a fossil fuel-dependent heating sector by increasing the share of distributed generations units based on renewables (e.g., geothermal or solar thermal) as well as units based on residual heat from some industrial processes. Such units can be flexibly incorporated with the support of heat storage devices (typically water tanks) \cite{Lund2014} (see also \cite{guelpa2019thermal,novitsky2020smarter}).

Heat distribution in DHSs, in the face of varying weather conditions and heat demand profiles, is supported by a control system in charge of regulating the supply (return) temperature of heat producers (consumers) and the state of charge of storage units \cite{vandermeulen_control_review_18,buffa2021advanced}. This is done by adjusting producers' power injections and the overall system's flow rates so that the heat supply matches the demand and the system temperatures stay within prescribed domains. 
Optimal, predictive (and centralized) control of DHSs has been addressed for systems without (\cite{sandou_predictive_05,krug_DH_2021}) and with (\cite{verrilli2016model}) heat storage units.
 Supply temperature control and storage volume regulation is performed in \cite{Scholten_tcst_2015} on a dynamic,  nonlinear system model of a DH system comprising a single heating station with an adjacent storage tank. The control design is based on the internal model principle  and offers closed-loop stability guarantees and robustness against uncertainty of some parameters.
 For a similar system model, which does not consider storage units,  but does consider  transport dynamics of the distribution network (codified as a delay), a control design based on Lyapunov-Krasovksii theory to achieve (supply and return) temperature regulation is reported in  \cite{bendtsen_control_2017}. 
Recently, the control of supply temperatures of heating stations  was  investigated in \cite{krishna_and_schiffer_21} using passivity-based design within the context of electro-thermal microgrids consisting of distributed generation units. 
Optimal, open-loop end-user temperature control is investigated in \cite{Alisic2019} via  numerical simulations of a DH system with multiple {\em prosumers}. 
Focused on the hydraulic dynamics, i.e., neglecting the thermal behavior, the works \cite{DePersis2011,DePersis2014,Scholten2017} address  pressure regulation on single producer DH systems and \cite{jmnj_2021_adaptive,Trip2019a} consider flow and volume control in the multi-producer setting.

%
%
%
%
%
%
%
%

\subsection{Contributions}

In Section~\ref{sec:model}, we describe the setup of the considered DH system, which features multiple, distributed producers with heat storage capabilities. We then develop a (modular) dynamic model that  describes the behavior of the overall system temperatures, including heat exchangers and storage tanks, and  the  supply and return layers of the distribution network.  Our main contribution is in Section~\ref{sec:control}, where we design novel decentralized controllers to regulate  supply temperature of heat producers, the temperature and volume of the storage tanks and  the return temperature of heat consumers. Stability analysis of the overall closed-loop system concludes the section.  Note that the simultaneous treatment of the mentioned  tasks, using decentralized controllers and with the considered system setup is not addressed in the above cited works. 

%
%
%
%

\subsection{Notation (including table of variables)}
{\color{black}
The symbol $\mathbb{R}$ denotes the set of real numbers. For a vector $x\in\mathbb{R}^n$,  $x_i$ denotes its $i$th component, {i.e.}, $x=[x_1,\dots, x_n]^\top$; moreover,  $\mathbf{sign}(x)=[\text{sign}(x_1),\dots,\text{sign}(x_n)]^\top $, with $\text{sign}(0)=0$,  and $\vert x \vert = [\vert x_1 \vert,\dots, \vert x_n \vert]^\top$. An $m\times n$ matrix with all-zero entries is written as $\boldsymbol{0}_{m\times n}$. An $n$-vector of ones is written as $\boldsymbol{1}_n$, whereas the identity matrix of size $n$ is represented by $I_n$. For any vector $x\in \mathbb{R}^n$, we denote by {$\mathrm{diag} (x) $}  a diagonal matrix with elements $x_i$ in its main diagonal. For any time-varying signal $w$, we represent by $\bar w$ its steady-state value, if exists. {Also, we write time derivatives as $\dot{x}(t)$, and omit the argument $t$ whenever is clear from  the context. For conveience, we provide summarize lists of relevant abbreviations and symbols in Tables \ref{table:abbreviations} and \ref{table:symbols}, respectively.}

{\footnotesize
 \begin{table}[t]
	\caption{List of Abbreviations}
	\centering
	{\color{black}
\begin{tabular}{ll}
DHS & district heating system\\
DN & distribution network\\
ST & storage tank\\
HX & heat exchanger\\
p & identifier for producers\\
c & identifier for consumers\\
st & identifier for storage tanks\\
sh & identifier for hot layer of storage tanks\\
sc & identifier for cold layer of storage tanks\\
s & identifier for elements associated to the DN's supply layer\\
r & identifier for elements associated to the DN's return layer
\end{tabular}}
\label{table:abbreviations}
\end{table}
}

{\footnotesize
 \begin{table}[t]
	\caption{Network Parameters}
	\centering
	{\color{black}
\begin{tabular}{ll}
$\mathcal{G}_\mathrm{s}$, $\mathcal{N}_\mathrm{s}$, $\mathcal{E}_\mathrm{s}$ & DH system's graph, nodes (junctions) and edges (pipes)\\
& associated to the supply layer\\
$\mathcal{G}_\mathrm{r}$, $\mathcal{N}_\mathrm{r}$, $\mathcal{E}_\mathrm{r}$ & DH system's graph, nodes (junctions) and edges (pipes)\\
& associated to the return layer\\
$P_{\mathrm{p},i}$ & power injection by $i$th producer, W\\
$P_{\mathrm{c},i}$ & power extraction by $i$th consumer, W\\
$q_{\mathrm{p},i}$ & flow rate  through edge the $i$th producer, $\mathrm{m}^3/s$\\
$q_{\mathrm{st},i}$ & flow rate at hot layer outlet (cold layer inlet) of the\\
& $i$th storage tank,  $\mathrm{m}^3/s$\\
$q_{\mathrm{c},i}$ & flow rate  through edge the $i$th consumer, $\mathrm{m}^3/s$\\
$q_{\mathrm{s},i}$ & flow rate  through $i\in \mathcal{E}_\mathrm{s}$, $\mathrm{m}^3/s$\\
$q_{\mathrm{r},i}$ & flow rate  through $i\in \mathcal{E}_\mathrm{r}$, $\mathrm{m}^3/s$\\
$V_{\mathrm{p},i}$ & effective volume of the secondary side of the\\
& $i$th producer's HX, $\mathrm{m}^3$\\
$V_{\mathrm{c},i}$ & effective volume of the primary side of the\\
& $i$th consumer's HX, $\mathrm{m}^3$\\
$V_{\mathrm{sh},i}$ & volume of water in the $i$th ST's hot layer, $\mathrm{m}^3$\\
$V_{\mathrm{c},i}$ &  volume of water in the $i$th ST's cold layer, $\mathrm{m}^3$\\
$V_{\mathrm{s},i}$ & effective volume of $i\in \mathcal{G}_\mathrm{s}$, $\mathrm{m}^3$\\
$V_{\mathrm{r},i}$ & effective volume of $i\in \mathcal{G}_\mathrm{r}$, $\mathrm{m}^3$\\
$T_{\mathrm{p},i}$ & average temperature of the secondary side of the\\
& $i$th producer's HX, $~^\circ \mathrm{C}$\\
$T_{\mathrm{c},i}$ & average temperature of the primary side of the\\
& $i$th consumer's HX, $~^\circ \mathrm{C}$\\
$T_{\mathrm{sh},i}$ & average temperature of the $i$th ST's hot layer, $~^\circ \mathrm{C}$\\
$T_{\mathrm{sc},i}$ & average temperature of the $i$th ST's cold layer, $~^\circ \mathrm{C}$\\
$T_{\mathrm{s},i}$ & average temperature of the $i\in \mathcal{G}_\mathrm{s}$, $~^\circ \mathrm{C}$\\
$T_{\mathrm{r},i}$ & average temperature of the $i\in \mathcal{G}_\mathrm{r}$, $~^\circ \mathrm{C}$\\
$\rho$ & density of water, $\mathrm{kg}/\mathrm{m}^3$\\
$c_{\mathrm{s.h.}}$ & specific heat of water, $\mathrm{J}/\mathrm{kg}~^\circ \mathrm{C}$\\
$(\cdot)^\mathrm{in}_i$ & quantity associated to an inlet\\
$(\cdot)^\mathrm{out}_i$ & quantity associated to an outlet
\end{tabular}}
\label{table:symbols}
\end{table}
}}

\section{System model}\label{sec:model}

\subsection{Setup and main modeling assumptions}

We consider a DH system with multiple, distributed $n_\mathrm{p}$ producers and $n_\mathrm{c}$ consumers  interconnected through a distribution network (DN) that has a supply (hot) layer and return (cold) layer.  The specific composition of producers and consumers is shown in  Fig.~\ref{fig:hydraulic_complex_elements}. Note that each producer can continuously drain water from the DN's return layer, heats it through a heat exchanger {\color{black}(HX)} and injects the heated stream into the DN's supply layer. A converse operation follows for consumers.
%

Following \cite{Scholten_tcst_2015}, we consider a DH system with stratified storage tanks.  Each tank  stores a mixture of hot and cold water perfectly separated by a  thermocline. The volume of  hot water   is on top and the cold one at the bottom. It is assumed that there is no heat {or mass} exchange between the mixtures. Moreover, each storage tank is considered to have two inlet/outlet pairs for hot and cold water, respectively. The topology of storage tanks is shown in Fig.~\ref{fig:hydraulic_complex_elements}.  As a simplifying assumption, we consider that each producer is interfaced to the DN  via a storage tank. Then,   each producer  drains water from the cold layer of a storage tank and injects it into the tank's hot layer. Using  the remaining inlet/outlet pair, the tank  can fill in its cold layer with water taken from the return layer of the DN. At the same time, the storage tank injects water from its hot layer into the DN's supply layer. We note that our results can be slightly adjusted to consider producers directly connected to the DN. However, standalone storage tanks, {i.e.}, storage tanks with no immediate access to a heat producer are not considered in this work. 

{\color{black}
Additional modeling assumptions, some of which are fairly standard in related literature (see, e.g., \cite{Scholten_tcst_2015,bendtsen_control_2017,Alisic2019}) are the following (see \cite{jmc_dh_modeling_2020} for more details):
\begin{assumption}\label{assu:modeling_assumptions}
(i) the density $\rho>0$ and specific heat $c_{\mathrm{s.h.}}>0$ of water are spatially uniform and constant in time (for ease of notation we take $\rho=c_{\mathrm{s.h.}}=1$); (ii) the flow through any pipe is (spatially) one-dimensional. (iii) each device (pipe, storage tank, junction) is completely filled with water at all times; (iv) the internal energy of any water stream portion depends linearly on its temperature; {\color{black} the overall DH system is leak-free and lossless.}
\end{assumption}
}


\begin{figure}[t]
\begin{center}
\includegraphics[width=0.98\linewidth]{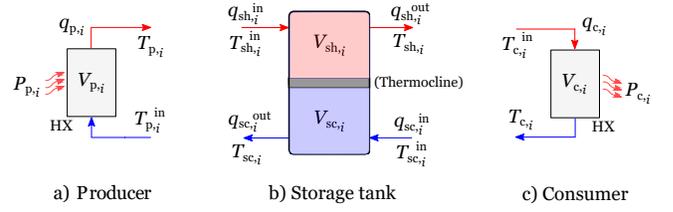}
\caption{\footnotesize Topologies of producers, consumers and storage tanks~\cite{Scholten_tcst_2015,bendtsen_control_2017}:   $T_{\chi,i}$, $V_{\chi,i}$,  $q_{\chi,i}$ and $P_{\chi,i}$ stand for temperature, volume, flow rate and thermal power associated with the $i$th device $\chi\in \{\mathrm{p},~\mathrm{sh},~\mathrm{sc},~\mathrm{st},~\mathrm{c} \}$.}
\label{fig:hydraulic_complex_elements}
\end{center}
\end{figure}

\subsection{Dynamics of  producers and storage tanks}

Consider the notation used in Fig.~\ref{fig:hydraulic_complex_elements}. {\color{black}
The heat balance at the secondary side of the $i$th producer's  heat exchanger can be written as follows \cite{Scholten_tcst_2015}, \cite{bendtsen_control_2017}:
\begin{equation}
V_{\mathrm{p},i}\dot{T}_{\mathrm{p},i} = q_{\mathrm{p},i}\left(T_{\mathrm{p},i}^\mathrm{in}-T_{\mathrm{p},i} \right) + P_{\mathrm{p},i},
\end{equation} 
where $q_{\mathrm{p},i}$ and $P_{\mathrm{p},i}$ are considered to be control variables: each $q_{\mathrm{p},i}$ could be controlled in practice  independently by a  pump in series with the secondary side of the producer's heat exchanger whereas each $P_{\mathrm{p},i}$ could be controlled independently by adjusting the flow in the primary side (hotter side)  of a producer's heat exchanger (see \cite[Appendix~A]{Scholten_tcst_2015} for additional details).

On the other hand, the heat balance at the hot and cold layer of each storage tank can be written respectively as follows:
\begin{align}
\frac{\mathrm{d}}{\mathrm{d}t}(V_{\mathrm{sh},i}{T}_{\mathrm{sh},i}) & = q_{\mathrm{sh},i}^\mathrm{in}T_{\mathrm{sh},i}^\mathrm{in}-q_{\mathrm{sh},i}^\mathrm{out}T_{\mathrm{sh},i}\\
\frac{\mathrm{d}}{\mathrm{d}t}(V_{\mathrm{sc},i}{T}_{\mathrm{sc},i}) & =q_{\mathrm{sc},i}^\mathrm{in}T_{\mathrm{sc},i}^\mathrm{in}-q_{\mathrm{sc},i}^\mathrm{out}T_{\mathrm{sc},i}.
\end{align}
Moreover, the volume dynamics of of the layers of each tank obey the following equations:
\begin{align}
\dot{V}_{\mathrm{sh},i} & = q_{\mathrm{sh},i}^\mathrm{in}-q_{\mathrm{sh},i}^\mathrm{out}\\
\dot{V}_{\mathrm{sc},i} & = q_{\mathrm{sc},i}^\mathrm{in}-q_{\mathrm{sc},i}^\mathrm{out}.
\end{align}

Without loss of generality, let the $i$th producer be adjacent to the $i$th storage tank. Since each producer is interfaced to the DN through a storage tank, we have that   $q_{\mathrm{sh},i}^\mathrm{in}=q_{\mathrm{sc},i}^\mathrm{out}=q_{\mathrm{p},i}$, $T_{\mathrm{pr},i}^\mathrm{in}=T_{\mathrm{sc},i}$ and  $T_{\mathrm{sh},i}^\mathrm{in}=T_{\mathrm{p},i}$. Also, since the storage tank must remain completely filled with water all the time, we have that $\dot{V}_{\mathrm{sh},i}+\dot{V}_{\mathrm{sc},i}=0$ must hold. This is equivalent to
\begin{equation}
q_{\mathrm{sc},i}^\mathrm{in}=q_{\mathrm{sh},i}^\mathrm{out}=:q_{\mathrm{st},i}.
\end{equation}
}
In view of the above considerations,  the temperature dynamics of the producers and storage tanks, as well as the volume dynamics of the storage tanks can be modeled as follows \cite{Scholten_tcst_2015} (see also \cite{bendtsen_control_2017,jmc_dh_modeling_2020}):
\begin{subequations}\label{eq:producer+storage dyn}
\begin{align}
V_{\mathrm{p},i}\dot{T}_{\mathrm{p},i} &  = q_{\mathrm{p},i}\left(T_{\mathrm{sc},i}-T_{\mathrm{p},i} \right) + P_{\mathrm{p},i},  \label{eq:producer+storage dyn (a)}\\
V_{\mathrm{sh},i}\dot{T}_{\mathrm{sh},i} & = q_{\mathrm{p},i}\left(T_{\mathrm{p},i}-T_{\mathrm{sh},i}\right), \label{eq:producer+storage dyn (b)}\\
V_{\mathrm{sc},i}\dot{T}_{\mathrm{sc},i} & =q_{\mathrm{st},i}\left(T_{\mathrm{sc},i}^\mathrm{in}-T_{\mathrm{sc},i}\right), \label{eq:cold storage dyn}\\
\dot{V}_{\mathrm{sh},i} & = q_{\mathrm{p},i}-q_{\mathrm{st},i} \label{eq:producer+storage dyn (c)},\\
\dot{V}_{\mathrm{sc},i} & = q_{\mathrm{st},i}-q_{\mathrm{p},i}. \label{eq:producer+storage dyn (d)}
\end{align}
\end{subequations}
We assume that $q_{\mathrm{p},i}$ and $P_{\mathrm{p},i}$ are control variables {\color{black}\cite{Scholten_tcst_2015,bendtsen_control_2017}} and $T_{\mathrm{sc},i}^\mathrm{in}$ is an external input. Later, when we introduce the temperature dynamics of the return layer of the DN, $T_{\mathrm{sc},i}^\mathrm{in}$ will be related with the temperature of a given junction in the return layer of the DN. We also identify each $q_{\mathrm{st},i}$ as an independent control variable, with the exception of one (see Remark~\ref{rem:independent_flows} for details about this).

\subsection{Consumer temperature dynamics}

Analogously to producers, the  heat balance at each consumer's heat exchanger is given by \cite{Scholten_tcst_2015}, \cite{bendtsen_control_2017}:
\begin{equation}\label{eq:consumer dynamics}
V_{\mathrm{c},i}\dot{T}_{\mathrm{c},i}  = q_{\mathrm{c},i}\left( T_{\mathrm{c},i}^\mathrm{in}-T_{\mathrm{c},i} \right) - P_{\mathrm{c},i}.
\end{equation}
The flow rate $q_{\mathrm{c},i}$ is an independent control variable, while $T_{\mathrm{c},i}^\mathrm{in}$ and $P_{\mathrm{c},i}\geq 0$ are external inputs. The power load $P_{\mathrm{c},i}$ will be treated as an {\em unknown} constant disturbance. Later we will equate $T_{\mathrm{c},i}^\mathrm{in}$ with the temperature of a certain junction in the supply layer of the DN.

\subsection{Distribution network's temperature dynamics}

Following  \cite{DePersis2011,wang_meshed_17,Hauschild2020}, we represent the supply and return layers of the DN as  connected  graphs with no self-loops. For the supply layer we introduce $\mathcal{G}_{\mathrm{s}}=(\mathcal{N}_\mathrm{s},\mathcal{E}_\mathrm{s})$, where the set of edges $\mathcal{E}_\mathrm{s}$ represents all distribution pipes, and the set of nodes $\mathcal{N}_\mathrm{s}$ denotes pipe junctions. An analogous description follows for the return layer of the DN, for which we  use the notation $\mathcal{G}_{\mathrm{r}}=(\mathcal{N}_\mathrm{r},\mathcal{E}_\mathrm{r})$. The focus of  this work is on DNs in which the supply and return layers are symmetric. Then, we assume that $\mathcal{G}_\mathrm{s}$ and $\mathcal{G}_\mathrm{r}$ are isomorphic and the bijection between $\mathcal{N}_\mathrm{s}$ and $\mathcal{N}_\mathrm{r}$ is referred to as $\gamma_\mathrm{dn}$. We refer the reader to \cite{felix_4gdh_21} for a discussion of prospective DH systems with non-symmetric DNs.\footnote{W.l.o.g., we assume that  any two pipes $(i,j)\in \mathcal{E}_\mathrm{s}$ and $\left(\gamma_\mathrm{dn}(i),\gamma_\mathrm{dn}(j)\right)\in \mathcal{E}_\mathrm{r}$ have the same length and diameter.}

{\color{black}We identify with the $i$th storage tank a unique node $k\in \mathcal{N}_\mathrm{s}$ such that there is a stream with rate $q_{\mathrm{st},i}$ from  the tank's hot layer into $k$, and at the same time there is a stream (with the same rate) from $\gamma_\mathrm{dn}(k)\in \mathcal{N}_\mathrm{r}$ towards  the tank's cold layer.  Analogously, to the $i$th consumer we associate a unique node $k\mathcal{N}_\mathrm{s}$  such that the there is a stream with rate $q_{\mathrm{c},i}$ from $k$  towards the consumer's heat exchanger and finally reaching the node $\gamma_{\mathrm{dn}}(k)\in \mathcal{N}_\mathrm{r}$.}

For ease of presentation, let us introduce further notation using $\mathcal{G}_\mathrm{s}$ as reference. We  fix an arbitrary reference orientation to every edge of $\mathcal{G}_\mathrm{s}$. Then, for any $i\in\mathcal{E}_\mathrm{s}$ with end nodes $j,k\in \mathcal{N}_\mathrm{s}$, $j\neq k$, we  say that $j$  is the head and $k$ is the tail of $i$, or viceversa, that $j$  is the tail and $k$ is the head of $i$. Moreover, following  \cite{Hauschild2020,krug_DH_2021,valdimarsson_14}, we introduce the functions $\mathcal{N}_\mathrm{s}^{-},\mathcal{N}_\mathrm{s}^{+}:\mathcal{E}_\mathrm{s}\rightarrow \mathcal{N}_\mathrm{s}$ and $\mathcal{E}_\mathrm{s}^{-},\mathcal{E}_\mathrm{s}^{+}:\mathcal{N}_\mathrm{s}\rightarrow \mathcal{E}_\mathrm{s}$ defined as follows. For any $i\in \mathcal{E}_\mathrm{s}$, $\mathcal{N}_\mathrm{s}^{-}(i)$ and  $\mathcal{N}_\mathrm{s}^{+}(i)$ respectively denote the tail and head of $i$; also, for any $j\in \mathcal{N}_\mathrm{s}$, $\mathcal{E}_{\mathrm{s}}^{-}(j)$ and $\mathcal{E}_\mathrm{s}^{+}(j)$ denote sets of edges with $j$ as tail node and $j$ as head node, respectively. To streamline the subsequent definition of the DN's temperature dynamics, we  assume that the reference orientation of any edge $i\in \mathcal{E}_\mathrm{s}$ matches the direction of the stream through it.  That is, if $j,k\in \mathcal{N}_\mathrm{s}$, $j\neq k$, are the tail and head of any $i\in \mathcal{E}_\mathrm{s}$, respectively,  then  the stream through $i$, henceforth denoted by $q_{\mathrm{s},i}$, is assumed to flow from $j$ to $k$ and we consider that $q_{\mathrm{s},i}\geq 0$.\footnote{Since flow reversals may occur, or the flow through an edge may simply not match the edge's reference orientation,  $q_\mathrm{s}$-dependent functions analogous to $\mathcal{N}_\mathrm{s}^{-}$, $\mathcal{N}_\mathrm{s}^{+}$, $\mathcal{E}_\mathrm{s}^{-}$ and $\mathcal{E}_\mathrm{s}^{+}$ can be defined to identify the source and target node of the stream through any edge (see, e.g., \cite{Hauschild2020}.)}

{\color{black}
Considering the above definitions and assumption, now we can write the temperature dynamics of the supply layer of the DN. First, the heat balance at each $i\in \mathcal{E}_\mathrm{s}$ can be written as follows (see \cite{jmc_dh_modeling_2020} for more details):
\begin{align}\label{eq:model_pipe_no_heat_trans}
V_{\mathrm{s},i}  \dot{T}_{\mathrm{s},i}   =   q_{\mathrm{s},i} \left(T_{\mathrm{s},i}^\mathrm{in}-T_{\mathrm{s},i}^\mathrm{out} \right).
\end{align}
Note that the right-hand side of this equation is the heat contribution due to the stream entering the pipe at temperature $T_{\mathrm{s},i}^\mathrm{in}$ minus the heat loss due to the stream leaving the pipe at temperature $T_{\mathrm{s},i}^\mathrm{out}$. Following  \cite{Hauschild2020} and \cite{krug_DH_2021} we impose the following constraints for all $i\in \mathcal{E}_\mathrm{s}$
\begin{align}\label{eq:node_constraints_temps}
T_{\mathrm{s},i}^\mathrm{in}   = T_{\mathrm{s},j}\left. \right\vert_{j=\mathcal{N}_\mathrm{s}^{-}(i)},~~~ {T_{\mathrm{s},i}^\mathrm{out}}   = {T_{\mathrm{s},i}},
\end{align}
which transform \eqref{eq:model_pipe_no_heat_trans} into
\begin{equation}\label{eq:pipe_simplified_wcons}
V_{\mathrm{s},i}  \dot{T}_{\mathrm{s},i}   =   q_{\mathrm{s},i} \left(T_{\mathrm{s},j}-T_{\mathrm{s},i} \right)\left. \right\vert_{j=\mathcal{N}_\mathrm{s}^{-}(i)},~ \forall i\in \mathcal{E}_\mathrm{s}.
\end{equation}
We note that the constraints \eqref{eq:node_constraints_temps} respectively imply that the stream entering any pipe will have the temperature of the node from which the stream sources from and that the temperature of the stream at the outlet of any pipe will have the same temperature as the spatially-averaged  temperature of the pipe's control volume (upwind scheme).

In view of  \eqref{eq:node_constraints_temps},  the heat balance at each  $j\in \mathcal{N}_\mathrm{s}$ is equivalent to the following
\begin{align}\label{eq:thermal_model_detail_nodes_0}
V_{\mathrm{s},j}\dot{T}_{\mathrm{s},j} & = \sum_{i\in \mathcal{E}^{+}_\mathrm{s}(j)}  q_{\mathrm{s},i}T_{\mathrm{s},i} 
- \left(\sum_{i\in \mathcal{E}_\mathrm{s}^{-}(j)}q_{\mathrm{s},i}\right)T_{\mathrm{s},j} \nonumber \\
& ~~~ +\sum_{i=1}^{n_\mathrm{p}} \alpha_{i,j}q_{\mathrm{st},i}T_{\mathrm{sh},i} - \left(\sum_{i=1}^{n_\mathrm{c}}\beta_{i,j} q_{\mathrm{c},i}\right) T_{\mathrm{s},j},
\end{align}
where $\alpha_{i,j}=1$ if $j\in \mathcal{N}_\mathrm{s}$ receives a stream from the $i$th storage tank (and $\alpha_{i,j}=0$ otherwise). Analogously,   $\beta_{i,j}=1$ if from $k\in \mathcal{N}_\mathrm{s}$ a stream is directed towards the  $i$th consumer. We note that the term in the left-hand side of \eqref{eq:thermal_model_detail_nodes_0} represents the rate of change of the thermal energy stored at node $k$ whereas  the terms in the right-hand side are the sum of the thermal energies of the streams that target $k$  or source $k$.   Note that the energy exchange due to the interaction with storage tanks and consumers is accounted separately (see the presence of the flows $q_{\mathrm{st},i}$ and $q_{\mathrm{c},i}$). 

One further constraint is due to volume (mass) balance at each $j\in\mathcal{N}_\mathrm{s}$, which reads as follows:
\begin{align}\label{eq:node_constraints_c}
 \dot{V}_{\mathrm{s},j} = 0  & = \sum_{i\in \mathcal{E}_\mathrm{s}^{+}(j)}  q_{\mathrm{s},i} - \sum_{i \in \mathcal{E}_\mathrm{s}^{-}(j)} q_{\mathrm{s},i}  +\sum_{i=1}^{n_\mathrm{p}} \alpha_{i,j}q_{\mathrm{st},i} \nonumber \\
& ~~~  - \sum_{i=1}^{n_\mathrm{c}}\beta_{i,j} q_{\mathrm{c},i}.
\end{align}
Clearing $ \sum_{i \in \mathcal{E}_\mathrm{s}^{-}(j)} q_{\mathrm{s},i}$ from \eqref{eq:node_constraints_c} and substituting into \eqref{eq:thermal_model_detail_nodes_0} results in the following simplification:
\begin{align}\label{eq:thermal_model_detail_nodes}
 V_{\mathrm{s},j} \dot{T}_{\mathrm{s},j}  & = \sum_{i\in \mathcal{E}_\mathrm{s}^{+}(j)}  q_{\mathrm{s},i}\left(T_{\mathrm{s},i} -T_{\mathrm{s},j}\right), \nonumber \\
 & ~~~~+\sum_{i=1}^{n_\mathrm{p}} \alpha_{i,j}q_{\mathrm{st},i}\left(T_{\mathrm{sh},i}-T_{\mathrm{s},j}\right),~\forall j\in \mathcal{N}_\mathrm{s}.
\end{align}
Then, equations \eqref{eq:pipe_simplified_wcons} and \eqref{eq:thermal_model_detail_nodes} conform the model for the temperature dynamics of the supply layer of  DH system's DN.

For the return layer of the DN, analogous definitions, assumptions and computations can be introduced to obtain
\begin{subequations}\label{eq:return layer temp dyn}
\begin{align}
V_{\mathrm{r},i}  \dot{T}_{\mathrm{r},i}  &  =   q_{\mathrm{r},i} \left(T_{\mathrm{r},j}-T_{\mathrm{r},i} \right)\left. \right\vert_{j=\mathcal{N}_\mathrm{r}^{-}(i)},~ \forall i\in \mathcal{E}_\mathrm{r}\\
 V_{\mathrm{r},j} \dot{T}_{\mathrm{r},j}  & = \sum_{i\in \mathcal{E}_\mathrm{r}^{+}(j)}  q_{\mathrm{r},i}\left(T_{\mathrm{r},i} -T_{\mathrm{r},j}\right), \nonumber \\
 & ~~~~+\sum_{i=1}^{n_\mathrm{c}} \beta_{i,j}q_{\mathrm{c},i}\left(T_{\mathrm{c},i}-T_{\mathrm{r},j}\right),~\forall j\in \mathcal{N}_\mathrm{r},
\end{align}
\end{subequations}
where $q_{\mathrm{r},i}$ is the flow rate of the stream through any edge $i\in \mathcal{E}_\mathrm{r}$.
}

{\color{black} For ease of reference, we find it convenient to write compactly the DN's dynamics \eqref{eq:pipe_simplified_wcons}, \eqref{eq:thermal_model_detail_nodes}  and \eqref{eq:return layer temp dyn} as follows.}  The temperature dynamics of each $i\in \mathcal{E}_{\chi}$, $j\in\mathcal{N}_{\chi}$, $\chi\in\{ \mathrm{s},~\mathrm{r}\}$ can be compactly represented as:
 \begin{subequations}\label{eq:supply_return DN temp dyn}
\begin{align}
 V_{{\chi},i}  \dot{T}_{{\chi},i}    & =   q_{{\chi},i} \left(T_{{\chi},j}-T_{{\chi},i} \right)\left. \right\vert_{j=\mathcal{N}_{\chi}^{-}(i)}, \label{eq:pipe_dyn_DN}\\
  V_{{\chi},j} \dot{T}_{{\chi},j}  & = \sum_{k\in \mathcal{E}_{\chi}^{+}(j)}  q_{{\chi},k}\left(T_{{\chi},k} -T_{{\chi},j}\right) + \Phi_{\chi,j}, \label{eq:node_dyn_DN}\\
 \Phi_{\chi,j} & = \begin{cases} 
 \sum_{k=1}^{n_\mathrm{p}} \alpha_{k,j}q_{\mathrm{st},k}\left(T_{\mathrm{sh},k}-T_{\mathrm{s},j}\right), & \chi=\mathrm{s},\\
 \sum_{k=1}^{n_\mathrm{c}} \beta_{k,j}q_{\mathrm{c},k}\left(T_{\mathrm{c},k}-T_{\mathrm{r},j}\right), & \chi=\mathrm{r},
 \end{cases}\label{eq:node_dyn_extra}
\end{align}
\end{subequations}
where $V_{\chi,i}$, $T_{\chi,i}$ and $q_{\chi,i}$ respectively stand for volume, temperature  and flow rate of the respective elements in $\mathcal{G}_\chi$. {\color{black} Also, we recall that} $\alpha_{k,j}=1$ if $j\in \mathcal{N}_\mathrm{s}$ receives a stream from the $k$th storage tank (and $\alpha_{k,j}=0$ otherwise). Analogously,   $\beta_{k,j}=1$ if from $j\in \mathcal{N}_\mathrm{s}$ a stream is directed towards the  $k$th consumer.

Equation~\eqref{eq:pipe_dyn_DN} represents the heat balance at any pipe  $i\in \mathcal{E}_{\chi}$, in which we have used  the boundary conditions $T_{\mathrm{s},i}^\mathrm{in}   = T_{\mathrm{s},j}\left. \right\vert_{j=\mathcal{N}_\mathrm{s}^{-}(i)}$ and  ${T_{\mathrm{s},i}^\mathrm{out}}   = {T_{\mathrm{s},i}}$  \cite{Hauschild2020,krug_DH_2021},  meaning that the stream entering any pipe will have the temperature of the node from which the stream sources from and that the temperature of the stream at the outlet of any pipe will have the same temperature as the spatially-averaged  temperature of the pipe's control volume (upwind scheme). Equation~\eqref{eq:node_dyn_DN} models the heat balance at each node $j\in \mathcal{N}_\chi$. The term in the left-hand side of \eqref{eq:node_dyn_DN} is the rate of change of the thermal energy stored at  $j$ and in the right-hand side we have  the sum of the thermal energies of the streams that target or source from $j$.   The interaction with storage tanks and consumers is represented by the term $\Phi_{\chi,j}$. Further details  appear in~\cite{jmc_dh_modeling_2020}.

\begin{remark}\label{rem:independent_flows}
{\bf (I)} Following \cite{jmc_dh_modeling_2020} (see also \cite{wang_meshed_17}),  we take as independent variables each $q_{\mathrm{p},i}$,  $q_{\mathrm{c},j}$ and $q_{\mathrm{s},k}$ ($q_{\mathrm{r},k}$) associated with any edge of $\mathcal{G}_\mathrm{s}$ ($\mathcal{G}_\mathrm{r}$) being a {\em chord}.  The same can be done for each $q_{\mathrm{st},i}$, except for one, say for the $m$th tank. Such a constraint stems  from the need to meet Kirchhoff's current laws and has the implication that $q_{\mathrm{st},m}=\sum_{\forall i} q_{\mathrm{c},i}-\sum_{\forall j\neq m}q_{\mathrm{st},j}$.  {\bf (II)} Having defined the temperature dynamics of the DN, we can define $T_{\mathrm{sc},i}^\mathrm{in}$ and $T_{\mathrm{c},i}^\mathrm{in}$  in \eqref{eq:producer+storage dyn} and \eqref{eq:consumer dynamics}, respectively,  as follows:
\begin{equation}\label{eq:temp inlet storage and consumers}
T_{\mathrm{sc},i}^\mathrm{in}=\alpha_{i,j}T_{\mathrm{r},k}\vert_{k=\gamma_{\mathrm{dn}}(j)},~T_{\mathrm{c},i}^\mathrm{in}=\beta_{i,j}T_{\mathrm{s},k}\vert_{k=\gamma_{\mathrm{dn}}^{-1}(j)}.
\end{equation}
Therefore, the overall temperature dynamics of the DH system are given by \eqref{eq:producer+storage dyn}, \eqref{eq:consumer dynamics}, \eqref{eq:supply_return DN temp dyn} and \eqref{eq:temp inlet storage and consumers}.
\end{remark}

{{\color{black}}Considering our assumption that storage tanks operate at maximum capacity all the time (constant total volume) and that the overall district heating system has a constant volume and is leak-free (see new Assumption~\ref{assu:modeling_assumptions}), then conservation of mass dictates that the sum of the flows entering/leaving  the district heating's supply/return layer should be equal to the sum of flows leaving/entering it, or equivalently, that
\begin{equation*}
\sum_{i=1}^{n_\mathrm{pr}}{q_{\mathrm{st},i}}+\sum_{i=1}^{n_\mathrm{c}}q_{\mathrm{c},i}=0,
\end{equation*}
where an adequate convention for the sign and direction of the flows should be taken. It is explained in \cite{jmc_dh_modeling_2020} that all consumer flows can be chosen as independent variables. Then, the equation above explains how one flow $q_{\mathrm{st},m}$ is a dependent variable (see Remark~\ref{rem:independent_flows}.{\bf{(I)}}). The remaining flows $q_{\mathrm{st},j}$, $j\neq m$, can also be chosen as independent variables~\cite{jmc_dh_modeling_2020}.}

%
%
%

%
%
%
%
%

\section{Control design and stability analysis}\label{sec:control}

Control design is conducted in this section to meet the  objectives of regulating each producer supply temperature $T_{\mathrm{p},i}$, each consumer return temperature $T_{\mathrm{c},i}$ and each storage tank (hot layer) volume $V_{\mathrm{sh},i}$ towards constant setpoints, {\color{black}usually specified by the DH operator and possibly based on some optimization criteria}.

In this development, both the consumer and producer controllers are fully decentralised, requiring only local measurements for implementation. The advantage of a decentralised architecture over a distributed one is twofold. Firstly, the control design is independent of the network topology, and since only local measurements are required for control implementation, no communications are required among the controllers. Secondly, as stability is verified at the individual nodes, producers and consumers can be added or removed from the network without impacting the overall stability.

\subsection{Control of producer temperatures}\label{subsec:producer_temp}

First, we consider the temperature regulation of the producer. The objective
is to utilise the power inputs $P_{\mathrm{p},i}$ to regulate the temperature to a known
constant value $T_{\mathrm{p},i}^\star$.

\begin{proposition}\label{prop:producerTemp}
Consider the $i$th  producer's temperature dynamics \eqref{eq:producer+storage dyn (a)} in closed-loop with the control law
\begin{equation}\label{eq:controller P_pi}
P_{\mathrm{p},i}=-q_{\mathrm{p},i}\left(T_{\mathrm{sc},i}-T_{\mathrm{p},i} \right)-k_{\mathrm{p},i}\left(T_{\mathrm{p},i}-T_{\mathrm{p},i}^\star \right),
\end{equation}
where $k_{\mathrm{p},i}>0$ is a tuning parameter. The resulting closed-loop dynamics are given by
\begin{equation}\label{eq:closed-loop dyn producers temps}
V_{\mathrm{p},i}\dot{T}_{\mathrm{p},i}=-k_{\mathrm{p},i}\left(T_{\mathrm{p},i}-T_{\mathrm{p},i}^\star \right)
\end{equation}
and the producer temperature converges monotonically to $T_{\mathrm{p},i}^\star$ at an exponential rate.

\end{proposition}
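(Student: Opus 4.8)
The plan is to proceed by direct substitution followed by elementary analysis of a scalar linear ODE. First I would insert the control law \eqref{eq:controller P_pi} into the open-loop dynamics \eqref{eq:producer+storage dyn (a)}. The first summand of $P_{\mathrm{p},i}$ is constructed precisely to cancel the convective term $q_{\mathrm{p},i}(T_{\mathrm{sc},i}-T_{\mathrm{p},i})$, leaving only the proportional feedback $-k_{\mathrm{p},i}(T_{\mathrm{p},i}-T_{\mathrm{p},i}^\star)$. This immediately yields the closed-loop dynamics \eqref{eq:closed-loop dyn producers temps}, and the cancellation is exact regardless of the (possibly time-varying) values taken by $q_{\mathrm{p},i}$ and $T_{\mathrm{sc},i}$.

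Next I would introduce the tracking error $e_{\mathrm{p},i} := T_{\mathrm{p},i} - T_{\mathrm{p},i}^\star$. Because the setpoint $T_{\mathrm{p},i}^\star$ is constant, $\dot{e}_{\mathrm{p},i}=\dot{T}_{\mathrm{p},i}$, so \eqref{eq:closed-loop dyn producers temps} becomes the scalar linear system $V_{\mathrm{p},i}\dot{e}_{\mathrm{p},i}=-k_{\mathrm{p},i}e_{\mathrm{p},i}$. Since $V_{\mathrm{p},i}>0$, this may be written as $\dot{e}_{\mathrm{p},i}=-(k_{\mathrm{p},i}/V_{\mathrm{p},i})e_{\mathrm{p},i}$, whose explicit solution is $e_{\mathrm{p},i}(t)=e_{\mathrm{p},i}(0)\exp(-(k_{\mathrm{p},i}/V_{\mathrm{p},i})t)$.

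From this closed-form solution the claimed properties follow at once. With $k_{\mathrm{p},i}>0$ and $V_{\mathrm{p},i}>0$ the exponent is strictly negative, so $e_{\mathrm{p},i}(t)\to 0$, equivalently $T_{\mathrm{p},i}\to T_{\mathrm{p},i}^\star$, exponentially at rate $k_{\mathrm{p},i}/V_{\mathrm{p},i}$; and a single decaying exponential never changes sign, which delivers the monotone convergence. Alternatively I could package the exponential estimate via the quadratic Lyapunov function $W=\tfrac{1}{2}V_{\mathrm{p},i}e_{\mathrm{p},i}^2$, for which $\dot{W}=-k_{\mathrm{p},i}e_{\mathrm{p},i}^2=-(2k_{\mathrm{p},i}/V_{\mathrm{p},i})W$.

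Candidly, there is no genuine obstacle here: the controller is a feedback-linearising cancellation, so the only points worth stating explicitly are that $V_{\mathrm{p},i}>0$ legitimises the division and that the constancy of $T_{\mathrm{p},i}^\star$ lets one identify $\dot{e}_{\mathrm{p},i}$ with $\dot{T}_{\mathrm{p},i}$. The one subtlety worth flagging is that the cancelled term involves the coupling signal $T_{\mathrm{sc},i}$ and the flow $q_{\mathrm{p},i}$; since the cancellation is exact, the resulting convergence is \emph{independent} of those signals, a decoupling fact that will matter when this subsystem is later interconnected with the storage and network dynamics.
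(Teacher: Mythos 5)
Your proposal is correct and follows essentially the same route as the paper: direct substitution of the control law to obtain the linear error dynamics, followed by the explicit exponential solution $T_{\mathrm{p},i}(t)=T_{\mathrm{p},i}^\star+\left[T_{\mathrm{p},i}(0)-T_{\mathrm{p},i}^\star\right]e^{-k_{\mathrm{p},i}V_{\mathrm{p},i}^{-1}t}$, from which monotone exponential convergence is read off. The additional Lyapunov packaging and the remark on decoupling from $q_{\mathrm{p},i}$ and $T_{\mathrm{sc},i}$ are fine but not needed.
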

\begin{proof}
The verification of the closed-loop dynamics \eqref{eq:closed-loop dyn producers temps} follows by direct substitution of \eqref{eq:controller P_pi} into \eqref{eq:producer+storage dyn (a)}. The temperature dynamics \eqref{eq:closed-loop dyn producers temps} have the solution $T_{\mathrm{p},i}(t)=T_{\mathrm{p},i}^\star+\left[T_{\mathrm{p},i}(0)-T_{\mathrm{p},i}^\star\right]e^{-k_{\mathrm{p},i}V_{\mathrm{p},i}^{-1}t}$, verifying the stability properties.
%
\end{proof}

%

\subsection{Storage tank control (hot layer)}
Next we consider regulating both the volume and temperature of each storage tank's 
hot layer. The objective is to regulate the volume $V_{\mathrm{sh},i}$ to a known constant value $V_{\mathrm{sh},i}^\star$ via control of the producer's flow rate $q_{\mathrm{p},i}$. As the temperature of the $i$th producer is regulated to the (specified) constant value $T_{\mathrm{p},i}^\star$, it is not surprising that the temperature of the tank's  hot layer converges to the same value.

{\color{black}The outgoing flow from each tank, $q_{\mathrm{st},i}$, could be controlled locally via a valve (or pump, see \cite{jmnj_2021_adaptive}). Each of these flows are treated as independent inputs with the exception for one node. As noted in Remark~\ref{rem:independent_flows}, there exists an index $m$ such that  $q_{\mathrm{st},m}=\sum_{\forall i} q_{\mathrm{c},i}-\sum_{\forall j\neq m}q_{\mathrm{st},j}$, which could potentially lead to negative $q_{\mathrm{st},m}$ in some scenarios. To avoid such a situation, we make the following assumption.}

\begin{assumption}\label{assu:positivity_flow_inputs}
The flows $q_{\mathrm{st},i}$ are non-negative at all times.
\end{assumption}

{\color{black}In subsequent design, each $q_{\mathrm{c},i}$ will be chosen to be non-negative, ensuring that $\sum_{\forall i} q_{\mathrm{c},i} \geq 0$. Note also that this assumption can be satisfied in practice by placing a check valve at the hot water outlet of each tank.}

\begin{proposition}\label{prop:vsh_controller}
Consider the volume dynamics of the $i$th tank's hot layer \eqref{eq:producer+storage dyn (c)} in closed-loop with the continuous control law
\begin{equation}\label{eq: controller q_p}
q_{\mathrm{p},i}=
\begin{cases}
-\kappa_{\mathrm{p},i}\left(V_{\mathrm{sh},i}-V_{\mathrm{sh},i}^\star \right) + q_{\mathrm{st},i}, & V_{\mathrm{sh},i}\leq V_{\mathrm{sh},i}^\star,\\
q_{\mathrm{st},i}e^{-\left(V_{\mathrm{sh},i}-V_{\mathrm{sh},i}^\star \right)}, & V_{\mathrm{sh},i}> V_{\mathrm{sh},i}^\star,
\end{cases}
\end{equation}
where $\kappa_{\mathrm{p},i}>0$ is a tuning parameter that adjusts the rate of convergence of $V_{\mathrm{sh},i}$ towards $V_{\mathrm{sh},i}^\star$. The resulting closed-loop dynamics are described by
\begin{equation}\label{eq:v_sh dyn cl}
\dot{V}_{\mathrm{sh},i}=
\begin{cases}
-\kappa_{\mathrm{p},i}\left(V_{\mathrm{sh},i}-V_{\mathrm{sh},i}^\star \right), & V_{\mathrm{sh},i}\leq V_{\mathrm{sh},i}^\star,\\
-\xi_{\mathrm{sh},i}(V_{\mathrm{sh},i})q_{\mathrm{st},i}, & V_{\mathrm{sh},i} > V_{\mathrm{sh},i}^\star,
\end{cases}
\end{equation} 
where $\xi_{\mathrm{sh},i}(V_{\mathrm{sh},i}) = \left(1 - e^{-\left(V_{\mathrm{sh},i}-V_{\mathrm{sh},i}^\star \right)}\right)$ is non-negative for $V_{\mathrm{sh},i} > V_{\mathrm{sh},i}^\star$.
The equilibrium point $\bar{V}_{\mathrm{sh},i}=V_{\mathrm{sh},i}^\star$ is Lyapunov stable and, in the case that ${q}_{\mathrm{st},i}>0$, asymptotically stable. 
\end{proposition}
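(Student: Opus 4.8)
The plan is to establish both claims with a single quadratic Lyapunov function and a branch-wise sign analysis of its derivative. I would take the candidate
\begin{equation*}
W_i(V_{\mathrm{sh},i}) = \tfrac{1}{2}\left(V_{\mathrm{sh},i}-V_{\mathrm{sh},i}^\star\right)^2,
\end{equation*}
which is continuously differentiable, positive definite with respect to the equilibrium $\bar{V}_{\mathrm{sh},i}=V_{\mathrm{sh},i}^\star$, and radially unbounded. Before differentiating, I would first record that the closed-loop vector field \eqref{eq:v_sh dyn cl} is continuous across the switching surface $V_{\mathrm{sh},i}=V_{\mathrm{sh},i}^\star$: the lower branch vanishes there, and since $\xi_{\mathrm{sh},i}(V_{\mathrm{sh},i}^\star)=1-e^{0}=0$ the upper branch vanishes as well. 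This confirms that $V_{\mathrm{sh},i}^\star$ is a genuine equilibrium, and since the right-hand side is piecewise smooth and continuous it is locally Lipschitz, so solutions exist and are unique.

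Next I would evaluate $\dot{W}_i$ along trajectories in each region. On the lower branch ($V_{\mathrm{sh},i}\leq V_{\mathrm{sh},i}^\star$), direct substitution gives $\dot{W}_i=-\kappa_{\mathrm{p},i}\left(V_{\mathrm{sh},i}-V_{\mathrm{sh},i}^\star\right)^2$, which is non-positive and strictly negative whenever $V_{\mathrm{sh},i}\neq V_{\mathrm{sh},i}^\star$. On the upper branch ($V_{\mathrm{sh},i}>V_{\mathrm{sh},i}^\star$) I obtain $\dot{W}_i=-\left(V_{\mathrm{sh},i}-V_{\mathrm{sh},i}^\star\right)\xi_{\mathrm{sh},i}(V_{\mathrm{sh},i})q_{\mathrm{st},i}$. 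Here the factor $\left(V_{\mathrm{sh},i}-V_{\mathrm{sh},i}^\star\right)$ is positive, and $\xi_{\mathrm{sh},i}(V_{\mathrm{sh},i})$ is strictly positive by the sign property already stated in the proposition. Invoking Assumption~\ref{assu:positivity_flow_inputs}, i.e.\ $q_{\mathrm{st},i}\geq 0$, the product is non-positive, so $\dot{W}_i\leq 0$ throughout the state space; by the standard Lyapunov theorem this yields Lyapunov stability of $\bar{V}_{\mathrm{sh},i}$.

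For the asymptotic-stability claim I would strengthen the upper-branch estimate: when $q_{\mathrm{st},i}>0$, all three factors above are strictly positive for $V_{\mathrm{sh},i}>V_{\mathrm{sh},i}^\star$, so $\dot{W}_i<0$ there as well. Combined with the lower-branch estimate, this gives $\dot{W}_i<0$ for every $V_{\mathrm{sh},i}\neq V_{\mathrm{sh},i}^\star$, and asymptotic stability follows. The one point requiring care is the status of $q_{\mathrm{st},i}$, which is in general a time-varying exogenous input rather than a fixed parameter; to conclude convergence in the usual sense I would interpret the hypothesis as $q_{\mathrm{st},i}$ being bounded below by a positive constant (or simply constant), so that the strict decrease of $W_i$ does not degenerate as the trajectory evolves. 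Everything else is a routine sign check, and the piecewise definition introduces no genuine difficulty once continuity at the switching surface has been verified.
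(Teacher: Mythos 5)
Your proof is correct and takes essentially the same route as the paper: the same quadratic Lyapunov candidate $W_i=\tfrac{1}{2}(V_{\mathrm{sh},i}-V_{\mathrm{sh},i}^\star)^2$, the same branch-wise sign analysis of $\dot W_i$, and the same use of Assumption~\ref{assu:positivity_flow_inputs} to obtain $\dot W_i\leq 0$ (Lyapunov stability) and strict decrease when $q_{\mathrm{st},i}>0$ (asymptotic stability). Your added checks---continuity of the closed-loop vector field at the switching surface and the caveat that the time-varying $q_{\mathrm{st},i}$ should be bounded away from zero for convergence---are refinements the paper omits, but they do not change the argument.
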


\begin{proof}
The verification of the closed-loop dynamics \eqref{eq:v_sh dyn cl} follows by direct substitution of \eqref{eq: controller q_p} into   \eqref{eq:producer+storage dyn (c)}. To verify stability, consider the {\color{black}Lyapunov function candidate $
W_{V_{\mathrm{sh},i}}=\frac{1}{2}(V_{\mathrm{sh},i}-V_{\mathrm{sh},i}^\star)^2.
$}
Its time derivative along solutions of \eqref{eq:v_sh dyn cl} can be straightforwardly verified to satisfy $\dot{W}_{V_{\mathrm{sh},i}}=$
\begin{equation}
\begin{cases}
-\kappa_{\mathrm{p},i}\left(V_{\mathrm{sh},i}-V_{\mathrm{sh},i}^\star \right)^2, & V_{\mathrm{sh},i}\leq V_{\mathrm{sh},i}^\star,\\
-\xi_{\mathrm{sh},i}(V_{\mathrm{sh},i})q_{\mathrm{st},i}\left(V_{\mathrm{sh},i}-V_{\mathrm{sh},i}^\star  \right), & V_{\mathrm{sh},i} > V_{\mathrm{sh},i}^\star.
\end{cases}
\end{equation} 
By Assumption \ref{assu:positivity_flow_inputs}, $q_{\mathrm{st},i}$ is non-negative for all the time, implying that $W_{V_{\mathrm{sh},i}}$ is non-increasing along solutions of  \eqref{eq:v_sh dyn cl}, which verifies Lyapunov stability. Note that if $q_{\mathrm{st},i}(t)>0$, $W_{V_{\mathrm{sh},i}}$ is {\em strictly} decreasing, which implies asymptotic stability and convergence of $V_{\mathrm{sh},i}$ to $V_{\mathrm{sh},i}^\star$.
\end{proof}


Next we show that the temperature of the $i$th storage tank's hot layer converges to the $i$th producer's (desired) outlet temperature $T_{\mathrm{p},i}^\star$. 

\begin{proposition}\label{prop:HotStorageTemperature}
Consider the temperature $T_{\mathrm{sh},i}$ of the $i$th storage tank and assume that the flow rate $q_{\mathrm{p},i}$ satisfies \eqref{eq: controller q_p}. Then, the temperature of the storage tank satisfies the following: {\bf (I)}  If the $i$th producer and storage tank have initial conditions satisfying $|T_{\mathrm{p},i}(0) - T_{\mathrm{p},i}^\star|, |T_{\mathrm{sh},i}(0) - T_{\mathrm{p},i}^\star| \leq \phi_i$, for some $\phi_i > 0$, then the storage tank temperature satisfies $|T_{\mathrm{sh},i}(t) - T_{\mathrm{p},i}^\star| \leq \phi_i$ all the time. {\bf (II)} If the $i$th  producer's flow rate $q_{\mathrm{p},i}$ is strictly positive, the temperature $T_{\mathrm{sh},i}$ converges to the $i$th producer's set-point $T_{\mathrm{p},i}^\star$.
%
\end{proposition}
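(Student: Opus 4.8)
The plan is to reduce both claims to the scalar error dynamics obtained by dividing \eqref{eq:producer+storage dyn (b)} by the (strictly positive) volume $V_{\mathrm{sh},i}$. Introducing the errors $e_{\mathrm{sh}}:=T_{\mathrm{sh},i}-T_{\mathrm{p},i}^\star$ and $e_{\mathrm{p}}:=T_{\mathrm{p},i}-T_{\mathrm{p},i}^\star$, equation \eqref{eq:producer+storage dyn (b)} becomes $\dot{e}_{\mathrm{sh}}=a_i(t)\,(e_{\mathrm{p}}-e_{\mathrm{sh}})$ with time-varying gain $a_i:=q_{\mathrm{p},i}/V_{\mathrm{sh},i}$. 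The first thing I would observe is that $a_i(t)\ge 0$ for all $t$: under Assumption~\ref{assu:positivity_flow_inputs} ($q_{\mathrm{st},i}\ge 0$) both branches of the controller \eqref{eq: controller q_p} return a non-negative $q_{\mathrm{p},i}$ (on $V_{\mathrm{sh},i}\le V_{\mathrm{sh},i}^\star$ the term $-\kappa_{\mathrm{p},i}(V_{\mathrm{sh},i}-V_{\mathrm{sh},i}^\star)$ is non-negative, and on $V_{\mathrm{sh},i}>V_{\mathrm{sh},i}^\star$ the factor is a non-negative exponential), while $V_{\mathrm{sh},i}>0$ since each tank is completely filled. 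Thus $T_{\mathrm{sh},i}$ is always driven contractively toward the moving target $T_{\mathrm{p},i}(t)$.

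For claim \textbf{(I)} I would take the Lyapunov candidate $W=\tfrac12 e_{\mathrm{sh}}^2$, whose derivative along the reduced dynamics is $\dot{W}=a_i\,e_{\mathrm{sh}}(e_{\mathrm{p}}-e_{\mathrm{sh}})\le a_i\,|e_{\mathrm{sh}}|\,(|e_{\mathrm{p}}|-|e_{\mathrm{sh}}|)$. By Proposition~\ref{prop:producerTemp} the producer error decays monotonically, so $|e_{\mathrm{p}}(t)|\le|e_{\mathrm{p}}(0)|\le\phi_i$ for all $t$. Consequently, whenever $|e_{\mathrm{sh}}|\ge\phi_i$ we have $|e_{\mathrm{p}}|-|e_{\mathrm{sh}}|\le 0$ and hence $\dot{W}\le 0$, so the sublevel set $\{\,|e_{\mathrm{sh}}|\le\phi_i\,\}$ is forward invariant; together with $|e_{\mathrm{sh}}(0)|\le\phi_i$ this yields $|T_{\mathrm{sh},i}(t)-T_{\mathrm{p},i}^\star|\le\phi_i$ for all $t$. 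Equivalently, one may argue directly that the interval $[T_{\mathrm{p},i}^\star-\phi_i,\,T_{\mathrm{p},i}^\star+\phi_i]$ is invariant for $\dot{T}_{\mathrm{sh},i}=a_i(T_{\mathrm{p},i}-T_{\mathrm{sh},i})$, since the target $T_{\mathrm{p},i}(t)$ never leaves this interval and the gain $a_i$ is non-negative.

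For claim \textbf{(II)} I would complete the square in the same computation to get $\dot{W}\le -a_i\,W+\tfrac12 a_i\,e_{\mathrm{p}}^2$. When $q_{\mathrm{p},i}(t)>0$, and using that $V_{\mathrm{sh},i}$ is bounded above by the tank capacity and, by Proposition~\ref{prop:vsh_controller}, stays bounded away from $0$, the gain admits a uniform positive lower bound $a_i(t)\ge\underline{a}_i>0$ and a uniform upper bound $\bar a_i$. Since $e_{\mathrm{p}}(t)\to 0$ exponentially by Proposition~\ref{prop:producerTemp}, the comparison lemma applied to $\dot{W}\le-\underline{a}_i W+\tfrac12\bar a_i\,e_{\mathrm{p}}^2$ yields $W(t)\to 0$, i.e. $T_{\mathrm{sh},i}\to T_{\mathrm{p},i}^\star$. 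An equivalent route is the cascade viewpoint: setting $\eta:=T_{\mathrm{sh},i}-T_{\mathrm{p},i}$ gives $\dot{\eta}=-a_i\eta-\dot{T}_{\mathrm{p},i}$, a scalar exponentially stable system (when $a_i\ge\underline{a}_i$) driven by the vanishing input $\dot{T}_{\mathrm{p},i}$; hence $\eta\to 0$, and since $e_{\mathrm{p}}\to 0$ we again obtain $e_{\mathrm{sh}}=\eta+e_{\mathrm{p}}\to 0$.

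The step I expect to be the main obstacle is precisely this last one: translating the pointwise hypothesis ``$q_{\mathrm{p},i}$ strictly positive'' into a genuine \emph{uniform} positive lower bound on the gain $a_i$ (equivalently, $\int_0^\infty a_i\,\mathrm{d}t=\infty$). Without such a uniform bound, the decaying-input argument would only certify boundedness of $T_{\mathrm{sh},i}$ rather than convergence; so the crux is to guarantee that $q_{\mathrm{p},i}$ does not decay to zero and that $V_{\mathrm{sh},i}$ remains within fixed physical limits, which is exactly where Assumption~\ref{assu:positivity_flow_inputs} and the volume regulation established in Proposition~\ref{prop:vsh_controller} are invoked.
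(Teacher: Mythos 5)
Your proof follows essentially the same route as the paper: the quadratic Lyapunov function $\tfrac{1}{2}\left(T_{\mathrm{sh},i}-T_{\mathrm{p},i}^\star\right)^2$, the bound $|T_{\mathrm{p},i}(t)-T_{\mathrm{p},i}^\star|\le\phi_i$ obtained from the monotone convergence in Proposition~\ref{prop:producerTemp}, and forward invariance of the sublevel set $\{|T_{\mathrm{sh},i}-T_{\mathrm{p},i}^\star|\le\phi_i\}$ for claim \textbf{(I)}, followed by a decaying-input argument for claim \textbf{(II)}. The one substantive difference is in \textbf{(II)}: the paper simply asserts that asymptotic stability ``follows'' from the dissipation inequality once $q_{\mathrm{p},i}>0$, whereas you correctly point out that pointwise strict positivity must be strengthened to a uniform lower bound on $q_{\mathrm{p},i}/V_{\mathrm{sh},i}$ (equivalently, divergence of $\int_0^\infty q_{\mathrm{p},i}/V_{\mathrm{sh},i}\,\mathrm{d}t$) for convergence to hold, and you supply the boundedness of $V_{\mathrm{sh},i}$ needed to make that step rigorous --- so your treatment of \textbf{(II)} is, if anything, more careful than the paper's.
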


\begin{proof}
To verify stability of the temperature dynamics \eqref{eq:producer+storage dyn (b)}, consider the Lyapunov function
\begin{equation}
W_{T_{\mathrm{sh},i}}=\frac{1}{2}\left(T_{\mathrm{sh},i}-T_{\mathrm{p},i}^\star \right)^2.
\end{equation}
Its time derivative along solutions of  \eqref{eq:producer+storage dyn (b)}  satisfies:
\begin{align}\label{eq:time-der lyap TSH (a)}
 \dot{W}_{T_{\mathrm{sh},i}}  
 & = -\frac{1}{V_{\mathrm{sh},i}}q_{\mathrm{p},i}\left(T_{\mathrm{sh},i}-T_{\mathrm{p},i}^\star \right)^2 \nonumber \\
 & ~~~~ +\frac{1}{V_{\mathrm{sh},i}}q_{\mathrm{p},i}\left(T_{\mathrm{sh},i}-T_{\mathrm{p},i}^\star \right)\left(T_{\mathrm{p},i}-T_{\mathrm{p},i}^\star \right).
\end{align}
Applying  Young's inequality, $\dot{W}_{T_{\mathrm{sh},i}}$ satisfies:
\begin{align*}
& \dot{W}_{T_{\mathrm{sh},i}}  \leq  -\frac{q_{\mathrm{p},i}}{V_{\mathrm{sh},i}}\left(T_{\mathrm{sh},i}-T_{\mathrm{p},i}^\star \right)^2 \nonumber \\
& + \frac{\mu_iq_{\mathrm{p},i}}{2V_{\mathrm{sh},i}}\left(T_{\mathrm{sh},i}-T_{\mathrm{p},i}^\star \right)^2 + \frac{q_{\mathrm{p},i}}{2\mu_iV_{\mathrm{sh},i}}\left(T_{\mathrm{p},i}-T_{\mathrm{p},i}^\star \right)^2,
\end{align*} 
where $\mu_i>0$ is an arbitrary constant. Noting that $q_{\mathrm{p},i}$ is non-negative by definition \eqref{eq: controller q_p}, we have that $\tfrac{q_{\mathrm{p},i}}{V_{\mathrm{sh},i}}\geq 0$. Then, setting $\mu_i=1$ results in
\begin{align}\label{WTshDeriv}
\dot{W}_{T_{\mathrm{sh},i}} & \leq -\frac{q_{\mathrm{p},i}}{2V_{\mathrm{sh},i}}\left(T_{\mathrm{sh},i}-T_{\mathrm{p},i}^\star \right)^2 + \frac{q_{\mathrm{p},i}}{2V_{\mathrm{sh},i}}\left(T_{\mathrm{p},i}-T_{\mathrm{p},i}^\star \right)^2.
\end{align}
{\color{black}To verify claim {\bf (I)}, note that $T_{\mathrm{p},i}(t)$ converges to $T_{\mathrm{p},i}^\star$ monotonically by Proposition \ref{prop:producerTemp}. As $|T_{\mathrm{p},i}(0) - T_{\mathrm{p},i}^\star| \leq \phi_i$, the bound $|T_{\mathrm{p},i}(t) - T_{\mathrm{p},i}^\star| \leq \phi_i$ is satisfied all the time. Applying this bound to \eqref{WTshDeriv} results in
\begin{equation}
	\dot{W}_{T_{\mathrm{sh},i}}
	\leq
	-\frac{1}{2V_{\mathrm{sh},i}}q_{\mathrm{p},i}\left(T_{\mathrm{sh},i}-T_{\mathrm{p},i}^\star \right)^2 \nonumber + \frac{1}{2V_{\mathrm{sh},i}}q_{\mathrm{p},i}\phi_i^2,
\end{equation}
 whose right-hand side is non-positive for all $|T_{\mathrm{sh},i}(t) - T_{\mathrm{p},i}^\star| \geq \phi_i$. Consequently, as $|T_{\mathrm{sh},i}(0) - T_{\mathrm{p},i}^\star| \leq \phi_i$, it follows that $|T_{\mathrm{sh},i}(t) - T_{\mathrm{p},i}^\star| \leq \phi_i$ all the time.} {\color{black}Considering claim {\bf (II)}, note that $T_{\mathrm{p},i}-T_{\mathrm{p},i}^\star \to 0$ by Proposition \ref{prop:producerTemp}. Recalling that $q_{\mathrm{p},i}$ is strictly positive, asymptotic stability follows from \eqref{WTshDeriv}.}
\end{proof}

{\color{black}
\begin{remark}
	Propositions \ref{prop:producerTemp} and \ref{prop:vsh_controller} rely on the exact compensation of some system dynamics. It can be shown with extended analysis that the corresponding closed-loop systems are ISS with respect to control imperfection and thus subsequent results share similar properties. This addition analysis, however, is omitted for brevity.
\end{remark}
}

\subsection{Temperature stability of the supply layer}\label{sec:supply layer}
We now focus on the dynamic behaviour of the DN's hot-layer and verify that the temperature of each node and edge within the layer is bounded and remains above a threshold value required for the consumers to operate correctly. Before proceeding, we define $T_{\mathrm{c}_{\mathrm{max}}}^\star$ to be the maximum temperature reference among all consumers. We similarly define $T^\star_{\mathrm{p}_{\mathrm{min}}}$ to be the minimum temperature reference among all producers. It is assumed that $T^\star_{\mathrm{p}_{\mathrm{min}}} > T^\star_{\mathrm{c}_{\mathrm{max}}} + \epsilon$ for some $\epsilon>0$.

\begin{proposition}\label{prop:supplyLayerTemp}
	Consider the set of all temperatures within the distribution supply layer $T_{\mathrm{s}}$ and assume that all temperatures have initial condition satisfying $T_{\mathrm{s},i}(0) \geq T_{\mathrm{c}_{\mathrm{max}}}^\star + \epsilon$, $i\in \mathcal{G}_\mathrm{s}$. If we additionally assume that the initial conditions of each producer and storage tank satisfy
	\begin{equation}\label{prop:supplyLayerTemp:initialCondition}
		\begin{split}
			|T_{\mathrm{p},i}(0) - T_{\mathrm{p},i}^\star|, |T_{\mathrm{sh},i}(0) - T_{\mathrm{p},i}^\star| \leq T^\star_{\mathrm{p}_\mathrm{min}} - T^\star_{\mathrm{c}_\mathrm{max}} - \epsilon,
		\end{split}
	\end{equation}
	the temperature of the distribution layer is bounded and satisfies 
	\begin{equation}\label{supplyTempBound}
		T_{\mathrm{s},i}(t) \geq T_{\mathrm{c}_\mathrm{max}}^\star + \epsilon,~\forall i\in\mathcal{G}_\mathrm{s}
	\end{equation}
	all the time.
\end{proposition}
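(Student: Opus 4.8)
The plan is to combine the hot-layer temperature bound from Proposition~\ref{prop:HotStorageTemperature} with a forward-invariance argument for the coupled node/edge dynamics \eqref{eq:pipe_dyn_DN}--\eqref{eq:node_dyn_extra} specialized to $\chi=\mathrm{s}$. First I would set $\theta := T_{\mathrm{c}_\mathrm{max}}^\star + \epsilon$ and $\phi_k := T^\star_{\mathrm{p}_\mathrm{min}} - T_{\mathrm{c}_\mathrm{max}}^\star - \epsilon$, which is strictly positive by the standing assumption $T^\star_{\mathrm{p}_\mathrm{min}} > T_{\mathrm{c}_\mathrm{max}}^\star + \epsilon$. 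The hypothesis \eqref{prop:supplyLayerTemp:initialCondition} is then exactly the initial-condition requirement of Proposition~\ref{prop:HotStorageTemperature}~{\bf(I)} with this $\phi_k$, so that claim yields $|T_{\mathrm{sh},k}(t) - T_{\mathrm{p},k}^\star| \leq \phi_k$ for all $t\geq 0$. Since $T_{\mathrm{p},k}^\star \geq T^\star_{\mathrm{p}_\mathrm{min}}$, this gives $T_{\mathrm{sh},k}(t) \geq T_{\mathrm{p},k}^\star - \phi_k \geq \theta$ for every storage tank and all time; that is, every temperature injected into the supply layer by a tank stays at or above the threshold $\theta$.

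Next I would treat the supply-layer temperatures collected from \eqref{eq:pipe_dyn_DN} (edge variables) and \eqref{eq:node_dyn_DN} (node variables) as the state of a single coupled system, and observe that each right-hand side is a non-negatively weighted sum of differences in which the state variable itself is subtracted: for an edge $i$ the only source is its tail node, with weight $q_{\mathrm{s},i}/V_{\mathrm{s},i}\geq 0$; for a node $j$ the sources are the incident edge temperatures, with weights $q_{\mathrm{s},k}/V_{\mathrm{s},j}\geq 0$, together with the tank injections $\Phi_{\mathrm{s},j}$, carrying weights $\alpha_{k,j}q_{\mathrm{st},k}/V_{\mathrm{s},j}\geq 0$ and source temperatures $T_{\mathrm{sh},k}$. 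Non-negativity of all flow weights is guaranteed by Assumption~\ref{assu:positivity_flow_inputs}, the sign convention $q_{\mathrm{s},i}\geq 0$, and positivity of the volumes.

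I would then show that the closed set $S := \{T_\mathrm{s} : T_{\mathrm{s},\ell}\geq\theta~\forall \ell\}$ is forward invariant. To this end, consider $\underline{T}(t) := \min_\ell T_{\mathrm{s},\ell}(t)$, which is locally Lipschitz, and bound its upper Dini derivative through any index $\ell^\star$ attaining the minimum. If $\ell^\star$ is an edge, its tail-node temperature is $\geq \underline{T}$, so $\dot{T}_{\mathrm{s},\ell^\star}\geq 0$; if $\ell^\star$ is a node, each incident edge temperature is $\geq\underline{T}$ and, whenever $\underline{T}\leq\theta$, each injected $T_{\mathrm{sh},k}\geq\theta\geq\underline{T}$, so again $\dot{T}_{\mathrm{s},\ell^\star}\geq 0$. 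Hence $D^+\underline{T}(t)\geq 0$ at every instant with $\underline{T}(t)=\theta$; since $\underline{T}(0)\geq\theta$ by the initial-condition assumption on $T_{\mathrm{s},i}(0)$, the standard comparison/Nagumo subtangentiality argument gives $\underline{T}(t)\geq\theta$ for all $t$, which is precisely \eqref{supplyTempBound}.

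Boundedness from above would follow by the mirror-image argument: with $M := \max\{\max_\ell T_{\mathrm{s},\ell}(0),~\max_k\sup_{t\geq 0}T_{\mathrm{sh},k}(t)\}$, finite because each $T_{\mathrm{sh},k}$ is trapped in $[T_{\mathrm{p},k}^\star-\phi_k,\, T_{\mathrm{p},k}^\star+\phi_k]$, the maximum $\overline{T}(t) := \max_\ell T_{\mathrm{s},\ell}(t)$ satisfies $D^+\overline{T}(t)\leq 0$ whenever $\overline{T}(t)=M$, so $T_{\mathrm{s},\ell}(t)\leq M$ for all $\ell$ and $t$. The step I expect to be the main obstacle is the rigorous handling of the non-smooth extremum functions $\underline{T}$ and $\overline{T}$ together with the edge--node coupling, namely justifying the Dini-derivative/subtangentiality conclusion uniformly over the time-varying, merely non-negative flow coefficients; the accompanying algebraic computations are routine.
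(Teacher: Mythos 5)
Your proposal is correct and follows essentially the same route as the paper's proof: invoke Proposition~\ref{prop:HotStorageTemperature}~{\bf(I)} with $\phi_i = T^\star_{\mathrm{p}_\mathrm{min}} - T^\star_{\mathrm{c}_\mathrm{max}} - \epsilon$ to keep every $T_{\mathrm{sh},k}$ above the threshold, then argue via the coldest element of the supply layer (edge or node) that its derivative is non-negative below the threshold, and obtain the upper bound by the mirror argument. The only difference is cosmetic: you formalize the minimum-principle step with Dini derivatives and Nagumo subtangentiality, whereas the paper states the same invariance argument informally.
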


\begin{proof}
	By Proposition \ref{prop:HotStorageTemperature}, the tank temperatures satisfy $T_{\mathrm{sh},i}(t) \geq T_\mathrm{c_{max}}^\star + \epsilon$ all the time. To verify the behavior of the distribution layer, we consider the coldest temperature in the layer and show that it is lower bounded by $T_\mathrm{c_{max}}^\star + \epsilon$. The temperature dynamics of each edge and node in the distribution layer are described by \eqref{eq:pipe_dyn_DN} and \eqref{eq:node_dyn_DN} with $\chi=s$.
	
	At an arbitrary time $t$, the coldest temperature could occur at either an edge $T_{\mathrm{s},i}$, which would imply that it is colder than all node temperatures, i.e.,  $T_{\mathrm{s},i} \leq T_{\mathrm{s},j}$, $j\in \mathcal{N}_\mathrm{s}$. Recalling \eqref{eq:pipe_dyn_DN} and noting that by the node ordering convention $q_{\mathrm{s},i} \geq 0$, we have that $V_{\mathrm{s},i}  \dot{T}_{\mathrm{s},i} \geq 0$, ensuring that if the coldest temperature is within an edge, it is non-decreasing. Now, consider that the coldest temperature is within a node $T_{\mathrm{s},j}$. As the node is the coldest within the network, it is colder than all edges, i.e.,   $T_{\mathrm{s},j} \leq T_{\mathrm{s},i}$, $i\in \mathcal{E}_\mathrm{s}$. Recalling \eqref{eq:node_dyn_DN} and noting that by the node ordering convention $q_{\mathrm{s},i} \geq 0$, we have that
	\begin{equation}\label{nodeTemp2}
		\begin{split}
			 V_{\mathrm{s},j} \dot{T}_{\mathrm{s},j}  
			 \geq 
			 \sum_{k=1}^{n_\mathrm{p}} \alpha_{k,j}q_{\mathrm{st},k}\left(T_{\mathrm{sh},k}-T_{\mathrm{s},j}\right).
		\end{split}
	\end{equation}
	As each storage tank satisfies $T_{\mathrm{sh},i}(t) \geq T_\mathrm{c_{max}}^\star + \epsilon$  all the time, $\dot{T}_{\mathrm{s},j}$ is non-decreasing for $T_{\mathrm{s},j} \leq T_\mathrm{c_{max}}^\star + \epsilon$. As this inequality holds all the time and all temperatures initially satisfy $T_{\mathrm{s},j} \geq T_\mathrm{c_{max}}^\star + \epsilon$ it follows that all temperatures within the supply layer are lower bounded by $T_\mathrm{c_{max}}^\star + \epsilon$.
	
	By a similar argument, we can conclude that the distribution layer is upper-bounded as a function of the layer temperature initial conditions $T_{\mathrm{s},i}, T_{\mathrm{s},j}$, storage tank initial conditions $T_{\mathrm{sh},k}(0)$ and producer reference set-points $T_{\mathrm{p},i}^\star$. 
\end{proof}

{\color{black}The assumptions in Proposition \ref{prop:supplyLayerTemp} require that the  initial temperature of the supply layer is higher than $T_{\mathrm{c}_{\max}}^\star$ and that the initial temperature of each producer and storage tank hot layer belongs to a boundary layer centered in $T_{\mathrm{p},i}^\star$ (see \eqref{prop:supplyLayerTemp:initialCondition}).}

\subsection{Control of consumer temperatures}

Now we propose a simple control law  to ensure the regulation of the $i$th consumer's return temperature $T_{\mathrm{c},i}$  to some specified value $T_{\mathrm{c},i}^\star$. It is assumed that the consumer can measure the temperature $T_{\mathrm{c},i}^{\mathrm{in}}$ of the incoming stream from the DN's supply layer (see Fig.~\ref{fig:hydraulic_complex_elements} and \eqref{eq:temp inlet storage and consumers}). This value, however, does not need to be constant.
Before proceeding with the control design, it is recalled that $P_{\mathrm{c},i}\geq 0$ is constant and unknown. 

\begin{proposition}\label{prop:consumer}
Consider the $i$th consumer's temperature dynamics \eqref{eq:consumer dynamics} in closed-loop with the control law
\begin{equation}\label{eq: control law q_c (a)}
q_{\mathrm{c},i}(T_{\mathrm{c},i}^\mathrm{in}, z_{\mathrm{c},i})=\frac{1}{T_{\mathrm{c},i}^\mathrm{in}-T_{\mathrm{c},i}^\star}z_{c,i},
\end{equation}
with initial condition {\color{black}$z_{\mathrm{c},i}(0) > 0$} where
\begin{subequations}\label{eq: control law q_c (b)}
\begin{align}
z_{\mathrm{c},i} & =  x_{\mathrm{c},i}-V_{\mathrm{c},i}T_{\mathrm{c},i}, \label{eq: control law q_c (c)} \\
\dot{x}_{\mathrm{c},i} & = q_{\mathrm{c},i}\left(T_{\mathrm{c},i}^\mathrm{in}-T_{\mathrm{c},i} \right)-z_{\mathrm{c},i}. \label{eq: control law q_c (d)}
\end{align}
\end{subequations}
The resulting closed-loop dynamics are such that $T_{\mathrm{c},i}\rightarrow T_{\mathrm{c},i}^\star$ and $z_{\mathrm{c},i}\rightarrow P_{\mathrm{c},i}^\star$ exponentially (as $t\rightarrow \infty$). Also, the state $z_{\mathrm{c},i}$, and hence the input $q_{\mathrm{c},i}$, are strictly non-negative. 
\end{proposition}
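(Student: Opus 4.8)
The plan is to recognize that the augmented controller \eqref{eq: control law q_c (b)} implements an internal model (integral action) of the unknown load $P_{\mathrm{c},i}$, so that the composite variable $z_{\mathrm{c},i}$ obeys a simple, autonomous, exponentially stable scalar equation that is decoupled from the temperature. First I would differentiate \eqref{eq: control law q_c (c)}, substitute the plant \eqref{eq:consumer dynamics} for $V_{\mathrm{c},i}\dot{T}_{\mathrm{c},i}$ and the controller state equation \eqref{eq: control law q_c (d)} for $\dot{x}_{\mathrm{c},i}$. The two copies of the term $q_{\mathrm{c},i}(T_{\mathrm{c},i}^\mathrm{in}-T_{\mathrm{c},i})$ cancel exactly, leaving
\[
\dot{z}_{\mathrm{c},i} = -z_{\mathrm{c},i} + P_{\mathrm{c},i}.
\]
This linear scalar equation has solution $z_{\mathrm{c},i}(t)=P_{\mathrm{c},i}(1-e^{-t})+z_{\mathrm{c},i}(0)e^{-t}$, which establishes $z_{\mathrm{c},i}\to P_{\mathrm{c},i}$ exponentially (this is the claimed limit, with $P_{\mathrm{c},i}^\star=P_{\mathrm{c},i}$). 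Since $P_{\mathrm{c},i}\ge 0$ and $z_{\mathrm{c},i}(0)>0$, the solution is a sum of two non-negative terms, the second strictly positive, so $z_{\mathrm{c},i}(t)>0$ for all $t$, giving the strict positivity of the state.

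Next I would argue that the feedback law \eqref{eq: control law q_c (a)} is well-defined and that $q_{\mathrm{c},i}>0$. The denominator $T_{\mathrm{c},i}^\mathrm{in}-T_{\mathrm{c},i}^\star$ equals $T_{\mathrm{s},k}-T_{\mathrm{c},i}^\star$ for the supply node $k$ feeding the consumer (recall \eqref{eq:temp inlet storage and consumers}). By Proposition \ref{prop:supplyLayerTemp} the supply-layer temperatures satisfy $T_{\mathrm{s},k}(t)\ge T_{\mathrm{c}_\mathrm{max}}^\star+\epsilon \ge T_{\mathrm{c},i}^\star+\epsilon$, so $T_{\mathrm{c},i}^\mathrm{in}-T_{\mathrm{c},i}^\star\ge \epsilon>0$ all the time; combined with $z_{\mathrm{c},i}>0$ this yields $q_{\mathrm{c},i}>0$, proving the last claim and ensuring no singularity in \eqref{eq: control law q_c (a)}.

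For the temperature convergence I would pass to the error coordinates $\tilde{T}_{\mathrm{c},i}:=T_{\mathrm{c},i}-T_{\mathrm{c},i}^\star$ and $\tilde{z}_{\mathrm{c},i}:=z_{\mathrm{c},i}-P_{\mathrm{c},i}$. Substituting \eqref{eq: control law q_c (a)} into \eqref{eq:consumer dynamics} and writing $T_{\mathrm{c},i}^\mathrm{in}-T_{\mathrm{c},i}=(T_{\mathrm{c},i}^\mathrm{in}-T_{\mathrm{c},i}^\star)-\tilde{T}_{\mathrm{c},i}$ collapses the closed loop into the cascade
\begin{align*}
V_{\mathrm{c},i}\dot{\tilde{T}}_{\mathrm{c},i}&=-q_{\mathrm{c},i}\tilde{T}_{\mathrm{c},i}+\tilde{z}_{\mathrm{c},i},\\
\dot{\tilde{z}}_{\mathrm{c},i}&=-\tilde{z}_{\mathrm{c},i}.
\end{align*}
The driving subsystem is globally exponentially stable, and because $q_{\mathrm{c},i}$ is bounded below by a positive constant (see below) the $\tilde{T}_{\mathrm{c},i}$-subsystem is exponentially stable and ISS with respect to $\tilde{z}_{\mathrm{c},i}$. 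A standard convolution/cascade estimate, $\tilde{T}_{\mathrm{c},i}(t)=\Phi(t,0)\tilde{T}_{\mathrm{c},i}(0)+\frac{1}{V_{\mathrm{c},i}}\int_0^t\Phi(t,s)\tilde{z}_{\mathrm{c},i}(s)\,ds$ with $|\Phi(t,s)|\le e^{-q_{\min}(t-s)/V_{\mathrm{c},i}}$ and $|\tilde{z}_{\mathrm{c},i}(s)|\le |\tilde{z}_{\mathrm{c},i}(0)|e^{-s}$, then shows $\tilde{T}_{\mathrm{c},i}\to 0$ exponentially, completing the proof. The main obstacle is securing the uniform positive lower bound $q_{\mathrm{c},i}\ge q_{\min}>0$ needed for exponential (rather than merely asymptotic) convergence: this requires $z_{\mathrm{c},i}$ to stay bounded away from zero, i.e. $P_{\mathrm{c},i}>0$, since $z_{\mathrm{c},i}(t)\ge\min\{z_{\mathrm{c},i}(0),P_{\mathrm{c},i}\}$, while an upper bound on $T_{\mathrm{c},i}^\mathrm{in}-T_{\mathrm{c},i}^\star$ follows from the boundedness of the supply temperatures in Proposition \ref{prop:supplyLayerTemp}; together these pin $q_{\mathrm{c},i}$ in a compact positive interval, so the exponential claim is understood for genuine, strictly positive loads.
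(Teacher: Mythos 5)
Your proposal is correct, and its first half is essentially the paper's own argument: you derive $\dot{z}_{\mathrm{c},i}=-z_{\mathrm{c},i}+P_{\mathrm{c},i}$ by differentiating \eqref{eq: control law q_c (c)} and cancelling the $q_{\mathrm{c},i}(T_{\mathrm{c},i}^{\mathrm{in}}-T_{\mathrm{c},i})$ terms, and you obtain positivity of $z_{\mathrm{c},i}$ and $q_{\mathrm{c},i}$ from $z_{\mathrm{c},i}(0)>0$, $P_{\mathrm{c},i}\geq 0$ and the bound $T_{\mathrm{c},i}^{\mathrm{in}}-T_{\mathrm{c},i}^\star\geq\epsilon$ inherited from Proposition~\ref{prop:supplyLayerTemp}, exactly as the paper does. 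Where you diverge is the temperature claim: the paper does not exploit the cascade structure explicitly but instead builds the joint quadratic Lyapunov function \eqref{consumerLyap} with a cross term, computes $\dot W_{T_{\mathrm{c},i}}=-e^\top\mathcal{R}_{T_{\mathrm{c},i}}e$, and invokes a Schur-complement condition on $\omega_{\mathrm{c},i}$ (using $\Vert z_{\mathrm{c},i}\Vert_\infty$ and $\epsilon$) to make $\mathcal{R}_{T_{\mathrm{c},i}}$ positive definite; you instead treat $\tilde z_{\mathrm{c},i}$ as an exponentially decaying input to the scalar $\tilde T_{\mathrm{c},i}$ equation and close the argument with a variation-of-constants estimate. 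Both routes rest on the same closed-loop identity $V_{\mathrm{c},i}\dot{\tilde T}_{\mathrm{c},i}=-q_{\mathrm{c},i}\tilde T_{\mathrm{c},i}+\tilde z_{\mathrm{c},i}$, which is \eqref{eq: proof T_c dyn CL (b)} rewritten. Your cascade argument is arguably more transparent about what the exponential claim actually requires --- a uniform lower bound $q_{\mathrm{c},i}\geq q_{\min}>0$, hence $P_{\mathrm{c},i}>0$ and an upper bound on $T_{\mathrm{c},i}^{\mathrm{in}}$ --- a degeneracy that is also present but less visible in the paper's proof, since the $(1,1)$ entry of $\mathcal{R}_{T_{\mathrm{c},i}}$ is precisely $q_{\mathrm{c},i}$ and vanishes as $z_{\mathrm{c},i}\to 0$. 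What the paper's construction buys in exchange is a single Lyapunov certificate for the coupled $(T_{\mathrm{c},i},z_{\mathrm{c},i})$ error system, which is the more standard currency for composing stability claims across interconnected subsystems.
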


\begin{proof}
{\color{black}Taking the time derivative of \eqref{eq: control law q_c (c)} and substituting in \eqref{eq:consumer dynamics}, the dynamics of $z_{\mathrm{c},i}$ can be written as
}
\begin{align}\label{eq: proof T_c dyn CL (a)}
\dot{z}_{\mathrm{c},i} 
& = -z_{\mathrm{c},i}+P_{\mathrm{c},i},
\end{align}
which ensures that $z_{\mathrm{c},i}$ is strictly positive for any $P_{\mathrm{c},i}> 0$ and converges to $P_{\mathrm{c},i}$ exponentially. Recalling the definition of $T_{\mathrm{c},i}^\mathrm{in}$ in \eqref{eq:temp inlet storage and consumers}, Proposition \ref{prop:supplyLayerTemp} implies that
\begin{equation}\label{epsilonCondition}
	T_{\mathrm{c},i}^\mathrm{in}-T_{\mathrm{c},i}^\star \geq \epsilon > 0,
\end{equation}
resulting in $q_{\mathrm{c},i} \geq 0$  all the time.

By substituting the control law \eqref{eq: control law q_c (a)}, \eqref{eq: control law q_c (b)} into the consumer's temperature dynamics \eqref{eq:consumer dynamics} results in:
\begin{align}\label{eq: proof T_c dyn CL (b)}
 V_{\mathrm{c},i}\dot{T}_{\mathrm{c},i}
& = -\frac{\left(T_{\mathrm{c},i}-T_{\mathrm{c},i}^\star \right)}{\left( T_{\mathrm{c},i}^\mathrm{in}-T_{\mathrm{c},i}^\star  \right)}z_{\mathrm{c},i}+\left(z_{\mathrm{c},i}-P_{\mathrm{c},i} \right).
\end{align}

To verify stability of the consumer's closed-loop temperature dynamics, consider the Lyapunov candidate
\begin{equation}\label{consumerLyap}
W_{T_{\mathrm{c},i}}= \frac{1}{2}\begin{bmatrix}
T_{\mathrm{c},i}-T_{\mathrm{c},i}^\star\\
z_{\mathrm{c},i}-P_{\mathrm{c},i}
\end{bmatrix}^\top\begin{bmatrix}
V_{\mathrm{c},i} & 1\\
1 & \omega_{\mathrm{c},i}
\end{bmatrix}\begin{bmatrix}
T_{\mathrm{c},i}-T_{\mathrm{c},i}^\star\\
z_{\mathrm{c},i}-P_{\mathrm{c},i}
\end{bmatrix},
\end{equation}
where $\omega_{\mathrm{c},i}$ is a positive constant satisfying $\omega_{\mathrm{c},i} > \frac{1}{V_{\mathrm{c},i}}$ to ensure positivity of $W_{T_{\mathrm{c},i}}$.
Considering  \eqref{eq: proof T_c dyn CL (a)} and \eqref{eq: proof T_c dyn CL (b)}, it can be  verified (through lengthy, yet direct computations), that
\begin{equation}
\dot{W}_{T_{\mathrm{c},i}}=-\begin{bmatrix}
T_{\mathrm{c},i}-T_{\mathrm{c},i}^\star\\
z_{\mathrm{c},i}-P_{\mathrm{c},i}
\end{bmatrix}^\top 
\mathcal{R}_{T_{\mathrm{c},i}}\begin{bmatrix}
T_{\mathrm{c},i}-T_{\mathrm{c},i}^\star\\
z_{\mathrm{c},i}-P_{\mathrm{c},i}
\end{bmatrix}^\top,
\end{equation}
where
\begin{equation}
\mathcal{R}_{T_{\mathrm{c},i}}=\begin{bmatrix}
\frac{1}{T_{\mathrm{c},i}^\mathrm{in}-T_{\mathrm{c},i}^\star}z_{\mathrm{c},i} & \frac{1}{2V_{\mathrm{c},i}}\frac{1}{T_{\mathrm{c},i}^\mathrm{in}-T_{\mathrm{c},i}^\star}z_{\mathrm{c},i}\\
 \frac{1}{2V_{\mathrm{c},i}}\frac{1}{T_{\mathrm{c},i}^\mathrm{in}-T_{\mathrm{c},i}^\star}z_{\mathrm{c},i} & \omega_{\mathrm{c},i}-\frac{1}{V_{\mathrm{c},i}}
\end{bmatrix}.
\end{equation}
From the Schur complement condition for positive semi-definiteness, the matrix $\mathcal{R}_{T_{\mathrm{c},i}}$ is positive-definite provided that  $\omega_{\mathrm{c},i}$ satisfies
\begin{equation}\label{eq: proof T_c dyn CL (c)}
\omega_{\mathrm{c},i} > \frac{1}{V_{\mathrm{c},i}}+ \frac{1}{4V_{\mathrm{c},i}^2}\frac{1}{T_{\mathrm{c},i}^\mathrm{in}-T_{\mathrm{c},i}^\star}z_{\mathrm{c},i}.
\end{equation}
Note from \eqref{eq: proof T_c dyn CL (a)} that $z_{\mathrm{c},i}$ converges to $P_{\mathrm{c},i}$ exponentially, which implies that it has a finite upper bound $\Vert z_{\mathrm{c},i} \Vert_\infty$. Applying the condition \eqref{epsilonCondition}, \eqref{eq: proof T_c dyn CL (c)} holds  for any $\omega_{\mathrm{c},i}$ satisfying
\begin{equation}\label{eq: proof T_c dyn CL (d)}
\omega_{\mathrm{c},i}>\frac{1}{V_{\mathrm{c},i}}+\frac{1}{4V_{\mathrm{c},i}^2}\frac{1}{\epsilon}\Vert z_{\mathrm{c},i} \Vert_\infty.
\end{equation}
Note that the right-hand side of \eqref{eq: proof T_c dyn CL (d)} is greater or equal than that of \eqref{eq: proof T_c dyn CL (c)} and ensures positivity of \eqref{consumerLyap}. Since $\omega_{\mathrm{c},i}$ is not used in the definition of the control law, the closed-loop system \eqref{eq:consumer dynamics}, \eqref{eq: control law q_c (a)}, \eqref{eq: control law q_c (b)}  is stable for any $\omega_{\mathrm{c},i}$ satisfying~\eqref{eq: proof T_c dyn CL (d)}.
\end{proof}

\begin{remark}
Equations \eqref{eq: controller q_p} and \eqref{eq: control law q_c (a)} define control laws for each $q_{\mathrm{p},i}$ and $q_{\mathrm{c},i}$, respectively.   The remaining flow inputs, namely $q_{\mathrm{st},i}$ as well as $q_{\mathrm{s},i}$ ($q_{\mathrm{r},i}$) associated to chords of $\mathcal{G}_\mathrm{s}$ ($\mathcal{G}_\mathrm{r}$)---see Remark~\ref{rem:independent_flows}---can be fixed for simplicity, and to comply with Assumption~\ref{assu:positivity_flow_inputs},   as positive constants. Nonetheless,  the PI-like control law reported in \cite[Sec.~4.1]{Scholten2017} can also be used, which is non-negative all the time.
\end{remark}
{\color{black}
\begin{remark}
In this work, the consumer's power consumption $P_{c,i}$ is assumed to be constant. In practice, however, this quantity would be time-varying. If the dynamics of the power consumption are slow compared to the closed-loop network then changes in the load can be neglected. If the loads are periodic the loads can be dynamically compensated using a frequency estimator, similar to \cite{Gentili2007}.
\end{remark}
}
\subsection{Stability of the cold layer}\label{subsec:stab_cold_lay}
In this subsection, we study the stability of the cold layer, consisting of the return layer temperature and cold storage tank temperatures and volumes. The analysis is analogous to the one performed to assess the stability of the DH system's hot layer in Section \ref{sec:supply layer}.

\begin{proposition}\label{prop:returnLayerTemp}
	All edge and node temperatures within the return layer ($T_{\mathrm{r},i}$, $i\in \mathcal{G}_\mathrm{r}$) are bounded  all the time.
\end{proposition}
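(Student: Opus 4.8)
The plan is to mirror the argument of Proposition~\ref{prop:supplyLayerTemp}, exploiting the structural symmetry between the supply and return layers noted at the start of Section~\ref{subsec:stab_cold_lay}. The return layer dynamics are given by \eqref{eq:pipe_dyn_DN} and \eqref{eq:node_dyn_DN} with $\chi=\mathrm{r}$, and the coupling term $\Phi_{\mathrm{r},j}$ in \eqref{eq:node_dyn_extra} injects consumer return temperatures $T_{\mathrm{c},k}$ into the nodes via the flows $q_{\mathrm{c},k}\ge 0$. The key preliminary fact I would invoke is that, by Proposition~\ref{prop:consumer}, each consumer temperature $T_{\mathrm{c},i}\to T_{\mathrm{c},i}^\star$ exponentially and is therefore bounded for all time; this makes the source term $\Phi_{\mathrm{r},j}$ a bounded, non-negative-rate coupling.

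The core technique is the same extremum/comparison argument used before. First I would establish a lower bound: at any time $t$, consider the coldest temperature in the return layer. If it occurs at an edge $T_{\mathrm{r},i}$, then by \eqref{eq:pipe_dyn_DN} and $q_{\mathrm{r},i}\ge 0$ (node ordering convention) we get $V_{\mathrm{r},i}\dot{T}_{\mathrm{r},i}\ge 0$, so the coldest edge temperature is non-decreasing. If the coldest temperature occurs at a node $T_{\mathrm{r},j}$, then all edge terms in \eqref{eq:node_dyn_DN} are non-negative and the consumer coupling drives $T_{\mathrm{r},j}$ towards the (bounded) values $T_{\mathrm{c},k}$; since the $T_{\mathrm{c},k}$ are bounded below (they converge to $T_{\mathrm{c},k}^\star$ and stay within a bounded tube), the node temperature cannot fall below a threshold determined by the minimum of the initial data and the consumer lower bounds. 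A symmetric argument gives the upper bound: the hottest temperature in the layer is non-increasing at edges, and at nodes is pulled towards the bounded $T_{\mathrm{c},k}$, hence is capped by the maximum of the initial conditions and $\sup_t\max_k T_{\mathrm{c},k}(t)$. Combining the two yields boundedness of all $T_{\mathrm{r},i}$, $i\in\mathcal{G}_\mathrm{r}$.

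The main obstacle, and the point requiring the most care, is that unlike the supply layer there is no clean a~priori monotone lower bound on the coupling term: the consumer temperatures $T_{\mathrm{c},k}$ are not constant but merely converging, so the source $\Phi_{\mathrm{r},j}$ is time-varying. I would handle this by using the uniform bound on $T_{\mathrm{c},k}(t)$ guaranteed by the exponential convergence in Proposition~\ref{prop:consumer} (the trajectory stays in a bounded set containing both the initial condition and the setpoint), so that the comparison argument closes with time-\emph{independent} upper and lower envelopes. A second subtlety is that $q_{\mathrm{c},k}$ is itself a state-dependent control input; but Proposition~\ref{prop:consumer} already certifies $q_{\mathrm{c},k}\ge 0$ and bounded for all time, so the sign structure needed for the extremum argument is preserved and no additional assumptions are required.

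Because only boundedness (not convergence to a specific setpoint) is claimed for the return layer, the proof does not need a Lyapunov construction; the extremum/sign argument suffices, exactly as the lower-bound portion of the proof of Proposition~\ref{prop:supplyLayerTemp}. I would therefore present the two envelope bounds in parallel and conclude boundedness directly, keeping the writeup brief by appealing to the explicit symmetry with Section~\ref{sec:supply layer}.
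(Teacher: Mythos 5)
Your proposal is correct and follows essentially the same route as the paper: the paper's proof simply invokes Proposition~\ref{prop:consumer} for boundedness of the consumer temperatures and then states that the rest follows analogously to the extremum/comparison argument of Proposition~\ref{prop:supplyLayerTemp}, which is precisely the argument you spell out. Your additional care about the time-varying source term $\Phi_{\mathrm{r},j}$ and the uniform envelope on $T_{\mathrm{c},k}(t)$ is a more explicit writeup of a step the paper leaves implicit, not a different approach.
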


\begin{proof}
	From Proposition \ref{prop:consumer}, all consumer temperatures are bounded and converge to $T_{\mathrm{c},i}^\star$. The remainder of the proof follows analogously to the proof of Proposition \ref{prop:supplyLayerTemp}.
\end{proof}

\begin{proposition}\label{prop:coldStorageTemp}
	The volume of each cold-layer storage tank $V_{\mathrm{sc},i}$ is Lyapunov stable and asymptotically stable for $q_{\mathrm{st},i} > 0$. The temperature is bounded all the time.
\end{proposition}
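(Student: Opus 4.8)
The plan is to dispatch the two claims separately, reusing in each case a result already proved for the hot layer or the return network, so that essentially no new Lyapunov construction is needed.

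For the volume claim, the starting point is conservation of the total tank volume. Adding \eqref{eq:producer+storage dyn (c)} and \eqref{eq:producer+storage dyn (d)} gives $\dot V_{\mathrm{sh},i}+\dot V_{\mathrm{sc},i}=0$, so $V_{\mathrm{sh},i}(t)+V_{\mathrm{sc},i}(t)\equiv V_{\mathrm{sh},i}(0)+V_{\mathrm{sc},i}(0)=:V_{\mathrm{tot},i}$. Setting $V_{\mathrm{sc},i}^\star:=V_{\mathrm{tot},i}-V_{\mathrm{sh},i}^\star$ yields the affine identity $V_{\mathrm{sc},i}-V_{\mathrm{sc},i}^\star=-\left(V_{\mathrm{sh},i}-V_{\mathrm{sh},i}^\star\right)$. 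Hence the candidate $W_{V_{\mathrm{sc},i}}=\tfrac12\left(V_{\mathrm{sc},i}-V_{\mathrm{sc},i}^\star\right)^2$ coincides pointwise with $W_{V_{\mathrm{sh},i}}$ from Proposition~\ref{prop:vsh_controller}, and so does its derivative along solutions. Lyapunov stability of $\bar V_{\mathrm{sc},i}=V_{\mathrm{sc},i}^\star$, together with asymptotic stability when $q_{\mathrm{st},i}>0$, then transfers verbatim from Proposition~\ref{prop:vsh_controller}.

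For the temperature claim, I would first invoke Proposition~\ref{prop:returnLayerTemp} to conclude that all return-layer temperatures are bounded; by \eqref{eq:temp inlet storage and consumers} the inlet $T_{\mathrm{sc},i}^\mathrm{in}=T_{\mathrm{r},k}\vert_{k=\gamma_{\mathrm{dn}}(j)}$ driving \eqref{eq:cold storage dyn} therefore admits finite bounds $\underline T\le T_{\mathrm{sc},i}^\mathrm{in}(t)\le \overline T$ for all $t$. I would then mimic the sign argument in the proof of Proposition~\ref{prop:supplyLayerTemp}: since $q_{\mathrm{st},i}\ge 0$ by Assumption~\ref{assu:positivity_flow_inputs} and $V_{\mathrm{sc},i}>0$, equation \eqref{eq:cold storage dyn} shows that the sign of $\dot T_{\mathrm{sc},i}$ is opposite to that of $T_{\mathrm{sc},i}-T_{\mathrm{sc},i}^\mathrm{in}$. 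Thus $T_{\mathrm{sc},i}$ is non-increasing whenever $T_{\mathrm{sc},i}\ge\overline T$ and non-decreasing whenever $T_{\mathrm{sc},i}\le\underline T$, so it stays trapped in $[\min(T_{\mathrm{sc},i}(0),\underline T),\,\max(T_{\mathrm{sc},i}(0),\overline T)]$, which proves boundedness.

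The only point needing care is the standing requirement $V_{\mathrm{sc},i}>0$, used both to divide \eqref{eq:cold storage dyn} and to run the comparison argument. I expect this to be the main (though mild) obstacle, and I would settle it directly from the closed-loop volume dynamics \eqref{eq:v_sh dyn cl}: when $V_{\mathrm{sh},i}>V_{\mathrm{sh},i}^\star$ one has $\dot V_{\mathrm{sh},i}=-\xi_{\mathrm{sh},i}(V_{\mathrm{sh},i})q_{\mathrm{st},i}\le 0$, and when $V_{\mathrm{sh},i}\le V_{\mathrm{sh},i}^\star$ the dynamics are linear and monotone, so $V_{\mathrm{sh},i}(t)\le\max(V_{\mathrm{sh},i}(0),V_{\mathrm{sh},i}^\star)$. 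Since both $V_{\mathrm{sh},i}(0)$ and the setpoint $V_{\mathrm{sh},i}^\star$ are physically below the total capacity $V_{\mathrm{tot},i}$, this gives $V_{\mathrm{sc},i}(t)=V_{\mathrm{tot},i}-V_{\mathrm{sh},i}(t)\ge V_{\mathrm{tot},i}-\max(V_{\mathrm{sh},i}(0),V_{\mathrm{sh},i}^\star)>0$, closing the argument.
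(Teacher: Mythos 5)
Your proposal is correct and follows essentially the same route as the paper: the volume claim is transferred from Proposition~\ref{prop:vsh_controller} via the reciprocal relation $\dot V_{\mathrm{sh},i}+\dot V_{\mathrm{sc},i}=0$, and temperature boundedness is deduced from Proposition~\ref{prop:returnLayerTemp} through the dynamics \eqref{eq:cold storage dyn}. You merely spell out details the paper leaves implicit (the explicit trapping interval for $T_{\mathrm{sc},i}$ and the positivity of $V_{\mathrm{sc},i}$, which the paper tacitly covers via Assumption~\ref{assu:modeling_assumptions}(iii)), and these additions are sound.
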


\begin{proof}
	Considering the reciprocal behaviour of the hot and cold-layer storage tanks in \eqref{eq:producer+storage dyn (c)}, \eqref{eq:producer+storage dyn (d)} and Proposition \ref{prop:vsh_controller}, the volume of the cold storage tank $V_{\mathrm{sc},i}$ shares the same stability properties of the hot-layer storage $V_{\mathrm{sh},i}$. The cold storage tank temperature is described by \eqref{eq:cold storage dyn}. Proposition \ref{prop:returnLayerTemp} ensures that temperatures within the return layer are bounded, which ensures that the temperature of the cold storage tank is bounded as well.
\end{proof}

%
%

\subsection{Overall system stability}
The final property to be verified is stability of the overall DH system. To achieve this, we invoke the propositions detailed through subsections~\ref{subsec:producer_temp}--\ref{subsec:stab_cold_lay} and verify that the assumptions utilized throughout are satisfied by the closed-loop system.

\begin{theorem}
	Consider the DH model detailed in Section~\ref{sec:model} in closed-loop with the decentralized control scheme described through subsections~\ref{subsec:producer_temp}--\ref{subsec:stab_cold_lay}. Assuming that Assumption \ref{assu:positivity_flow_inputs} holds, the producer and storage tank initial temperatures satisfy \eqref{prop:supplyLayerTemp:initialCondition} and the supply layer initial temperatures satisfy \eqref{supplyTempBound}, the closed-loop system satisfies the following properties: {\bf (I)}  The producer temperatures $T_{\mathrm{p},i}$ and hot storage temperatures $T_{\mathrm{sh},i}$ converge to the producer reference temperature $T_{\mathrm{p},i}^\star$. {\bf (II)} The consumer temperatures $T_{\mathrm{c},i}$ converge their reference temperatures $T_{\mathrm{c},i}^\star$. {\bf (III)} All network temperatures remain bounded. {\bf (IV)} The hot-layer storage tank volumes $V_{\mathrm{sh},i}$ converge to the reference value $V_{\mathrm{sh},i}^\star$ and the cold-layer storage tank volumes $V_{\mathrm{sc},i}$ converge to a constant value.
%
%
%
\end{theorem}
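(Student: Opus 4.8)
The plan is to prove the theorem by assembling the results already established in Propositions~\ref{prop:producerTemp}--\ref{prop:coldStorageTemp}, verifying that the chain of assumptions each proposition requires is consistently satisfied by the closed-loop system. The essential observation is that the decentralized controllers induce a cascade structure: the producer temperature dynamics \eqref{eq:closed-loop dyn producers temps} are autonomous, the hot storage volume dynamics \eqref{eq:v_sh dyn cl} depend only on $q_{\mathrm{st},i}$ (an exogenous non-negative input by Assumption~\ref{assu:positivity_flow_inputs}), and the remaining subsystems are driven by these. Thus each property follows by invoking the corresponding proposition, provided its hypotheses hold.

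First I would establish claim \textbf{(I)}. By Proposition~\ref{prop:producerTemp}, each $T_{\mathrm{p},i}$ converges exponentially and monotonically to $T_{\mathrm{p},i}^\star$. Convergence of $T_{\mathrm{sh},i}$ to $T_{\mathrm{p},i}^\star$ then follows from Proposition~\ref{prop:HotStorageTemperature}\textbf{(II)}, which requires $q_{\mathrm{p},i}$ to be strictly positive. Here I would note that $q_{\mathrm{p},i}$ given by \eqref{eq: controller q_p} is strictly positive whenever $q_{\mathrm{st},i}>0$ (in the regime $V_{\mathrm{sh},i}>V_{\mathrm{sh},i}^\star$ it equals $q_{\mathrm{st},i}e^{-(\cdots)}>0$, and for $V_{\mathrm{sh},i}\le V_{\mathrm{sh},i}^\star$ it is $q_{\mathrm{st},i}$ plus a non-negative term). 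Next, claim \textbf{(IV)}: hot-layer volume convergence is exactly Proposition~\ref{prop:vsh_controller} (asymptotic for $q_{\mathrm{st},i}>0$), and the cold-layer volume convergence follows from Proposition~\ref{prop:coldStorageTemp} together with $\dot V_{\mathrm{sc},i}=-\dot V_{\mathrm{sh},i}$ from \eqref{eq:producer+storage dyn (c)}--\eqref{eq:producer+storage dyn (d)}, so $V_{\mathrm{sc},i}$ converges to a constant.

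For claim \textbf{(III)} I would invoke Proposition~\ref{prop:supplyLayerTemp} to bound the supply-layer temperatures and Propositions~\ref{prop:returnLayerTemp} and~\ref{prop:coldStorageTemp} to bound the return-layer and cold-storage temperatures. The crucial point is that the hypotheses of Proposition~\ref{prop:supplyLayerTemp} are precisely the standing assumptions of the theorem---the initial-condition bound \eqref{prop:supplyLayerTemp:initialCondition} on producers/tanks and the lower bound \eqref{supplyTempBound} on supply temperatures. Having secured the supply-layer lower bound $T_{\mathrm{s},i}\ge T_{\mathrm{c}_\mathrm{max}}^\star+\epsilon$, claim \textbf{(II)} follows from Proposition~\ref{prop:consumer}: this lower bound guarantees the nondegeneracy condition \eqref{epsilonCondition}, namely $T_{\mathrm{c},i}^\mathrm{in}-T_{\mathrm{c},i}^\star\ge\epsilon>0$, which keeps the consumer controller \eqref{eq: control law q_c (a)} well-defined and yields $q_{\mathrm{c},i}\ge0$ for all time, so that $T_{\mathrm{c},i}\to T_{\mathrm{c},i}^\star$ exponentially.

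The main obstacle---and the step deserving the most care---is verifying the \emph{mutual consistency} of the positivity conditions, since the propositions are not genuinely independent but interlocked through the flow variables. Specifically, Assumption~\ref{assu:positivity_flow_inputs} ($q_{\mathrm{st},i}\ge0$) underpins Proposition~\ref{prop:vsh_controller}, while Proposition~\ref{prop:consumer} produces $q_{\mathrm{c},i}\ge0$, and the remark following Assumption~\ref{assu:positivity_flow_inputs} closes the loop by observing that $\sum_i q_{\mathrm{c},i}\ge0$ is exactly what renders the dependent flow $q_{\mathrm{st},m}=\sum_i q_{\mathrm{c},i}-\sum_{j\ne m}q_{\mathrm{st},j}$ consistent with Assumption~\ref{assu:positivity_flow_inputs}. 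I would therefore explicitly trace that the consumer-controller-generated non-negativity of all $q_{\mathrm{c},i}$, combined with the free choice of the independent $q_{\mathrm{st},j}$ as non-negative constants, is compatible with Assumption~\ref{assu:positivity_flow_inputs} for the one dependent flow, so that no circularity arises and every invoked proposition has its hypotheses met by the closed-loop trajectory.
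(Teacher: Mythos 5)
Your proposal is correct and follows essentially the same route as the paper's proof: each claim is obtained by invoking Propositions~\ref{prop:producerTemp}--\ref{prop:coldStorageTemp} and checking that their hypotheses are supplied by the theorem's standing assumptions (in particular that Proposition~\ref{prop:supplyLayerTemp} secures the lower bound \eqref{epsilonCondition} needed by Proposition~\ref{prop:consumer}). Your additional explicit discussion of the consistency of the flow positivity conditions (the dependent flow $q_{\mathrm{st},m}$ and the strict positivity of $q_{\mathrm{p},i}$ needed for Proposition~\ref{prop:HotStorageTemperature}\textbf{(II)}) is slightly more careful than the paper's one-line citations, but does not constitute a different approach.
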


\begin{proof}
	Claim {\bf (I)} follows from direct application of Propositions \ref{prop:producerTemp} and \ref{prop:HotStorageTemperature}.
	The temperatures of the distribution layer satisfy the lower bound \eqref{supplyTempBound} by Proposition \ref{prop:supplyLayerTemp} which ensures that the inequality \eqref{supplyTempBound} is satisfied. Consequently, claim {\bf (II)} is satisfied by Proposition \ref{prop:consumer}.
	Claim {\bf (III)} follows from direct application of Propositions \ref{prop:supplyLayerTemp}, \ref{prop:returnLayerTemp} and \ref{prop:coldStorageTemp}. Finally claim {\bf (IV)} follows from application of Propositions \ref{prop:vsh_controller} and \ref{prop:coldStorageTemp}.
\end{proof}

\section{Numerical Simulations}\label{sec:num_sims}

In this section the performance of the DH system model in closed-loop with the proposed controllers is illustrated via numerical simulations. The configuration and data are based on the case study reported in the arXiv version of \cite[Section~4]{jmc_dh_modeling_2020}, which corresponds to a DH system with three heat producers ($n_\mathrm{p}=3$),  nine consumers ($n_c=9$) and with the same topology as the sketch shown in Fig.~\ref{fig:1}.  Each tank is assumed to have a total capacity of $1000~\mathrm{m}^3$.

\begin{figure}[t]
\begin{center}
\includegraphics[width=0.8\linewidth]{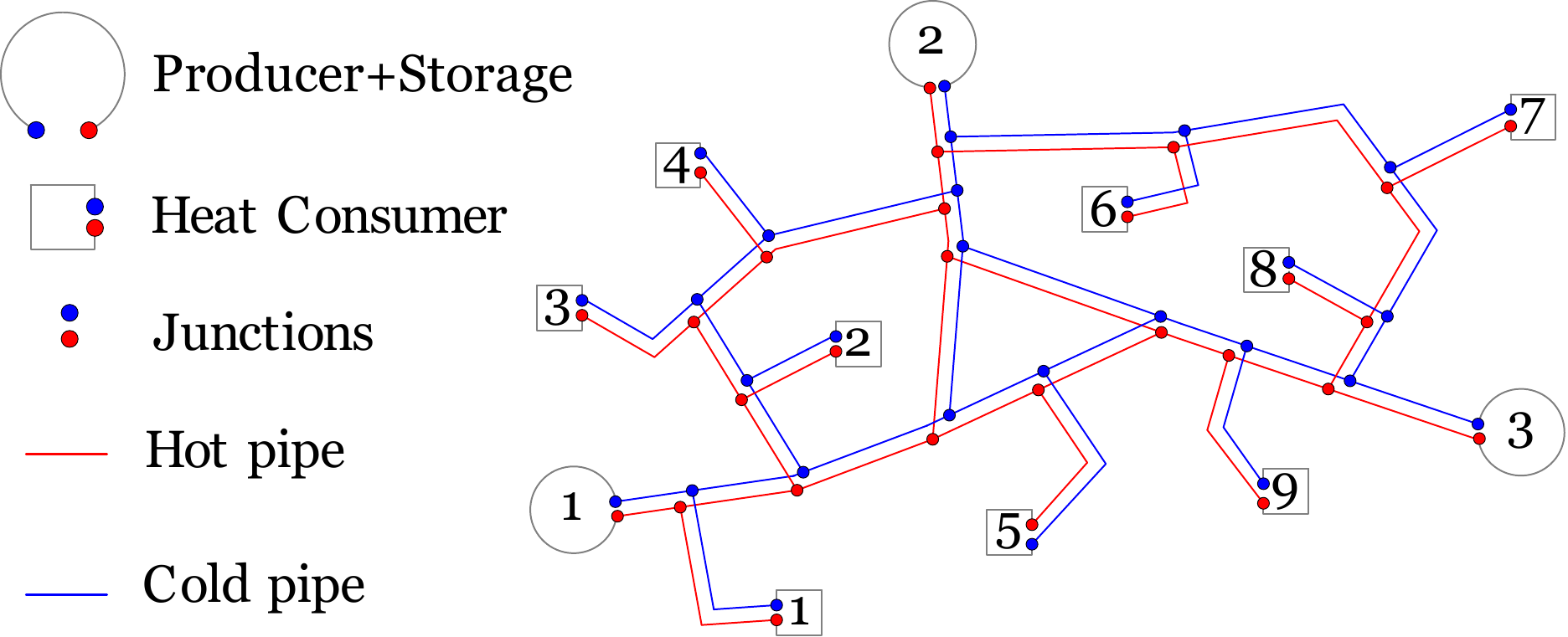}
\caption{Sketch of a simplified DH system (c.f.,  \cite{wang_meshed_17}).}
\label{fig:1}
\end{center}
\end{figure}


The tuning gains $k_{\mathrm{p},i}$ of the controllers \eqref{eq:controller P_pi} are all equal to $1\times 10^{-3}$. The same values are chosen for the gains  $\kappa_{\mathrm{p},i}$ of the controllers $q_{\mathrm{p},i}$ in \eqref{eq: controller q_p}. This selection is based on a trial-and-error procedure aimed at attaining a fair balance between settling time and overshoot for the signals of interest. The producers' and consumers' temperature setpoints $T_{\mathrm{p},i}^\star$ and $T_{\mathrm{c},i}^\star$ are chosen to be equal to $85~^\circ\mathrm{C}$ and $55~^\circ\mathrm{C}$, respectively.


An  explanation of the simulation results  shown in Fig.~\ref{fig:sims_1} is as follows. The system is initialized in the vicinity of a system's equilibrium in a context of low consumer demand  (50\% w.r.t. full demand) with $P_\mathrm{c}=(1.65,1.98,2.97,2.31,1.98,1.32,3.3,2.31,1.65)~$MW and with relatively small values  for the storage tanks' volumes setpoints with   $V_{\mathrm{sh}}^\star=(100,150,200)~\mathrm{m}^3$. Convergence of the signals on display is observed after a short transient (states and inputs). At $t=6$h all storage tanks switch to a charging mode and attain their respective new  setpoint $V_\mathrm{sh}^\star=(850,900,950)~\mathrm{m}^3$ at approximately $t=9$h. Note that the tanks switching to a charging mode causes  an increase in the producers'  powers  $P_{\mathrm{p},i}$ and flow $q_{\mathrm{p},i}$ during the process. At $t=12$h the consumers' heat demands are simultaneously increased to higher values (75\% w.r.t. full demand). The plot of some consumers' flows  $q_{{c},i}$ shows an increase at this instant and, after a short transient, the consumers' temperatures return to their fixed setpoints. Observe  that the increase in $P_{\mathrm{c},i}$ induces adjustments to the equilibrium values of $P_{\mathrm{p},i}$ and $q_{\mathrm{p},i}$ too, but without significant overshoots. At $t=18$h the tanks switch now to a discharging mode that ends at  approximately $t=21$h. During this process, it is possible to see a reduction in producers' powers and flows, contrary to what is observed during the tanks' charging mode. The new, lower values for the entries of $V_\mathrm{sh}^\star$ are maintained until the end of the simulation. Finally, it is observed that the producers' temperatures $T_{\mathrm{p},i}$ reach quickly (exponentially) their setpoints and remain at this value during the whole simulation time. 

{\color{black} In order to test the effect of potential physical or practical constraints, we performed additional numerical simulations where we saturated the values of all the inputs $q_{\mathrm{p},i}$, $P_{\mathrm{p},i}$ and $q_{\mathrm{c},i}$ such that they are positive and upper bounded by maximum nominal values. In  Fig.~\ref{fig:sims_sat_exp1} we have simulated the same scenario as for the results in Fig.~\ref{fig:sims_1}, considering additionally input constraints as described above. Note that, except for a slower charging rate for the hot layer of storage tank 2, there are no significant behavioral changes of the signals of interest with respect to those in Fig.~\ref{fig:sims_1}. Various additional numerical experiments were conducted by taking random initial conditions with at most a 25\% deviation with respect to a nominal equilibrium point and we obtained similar results (see Fig.~\ref{fig:sims_sat_exp2}).}

\begin{figure*}
\centering
\begin{subfigure}{0.49\textwidth}
\includegraphics[width=\linewidth]{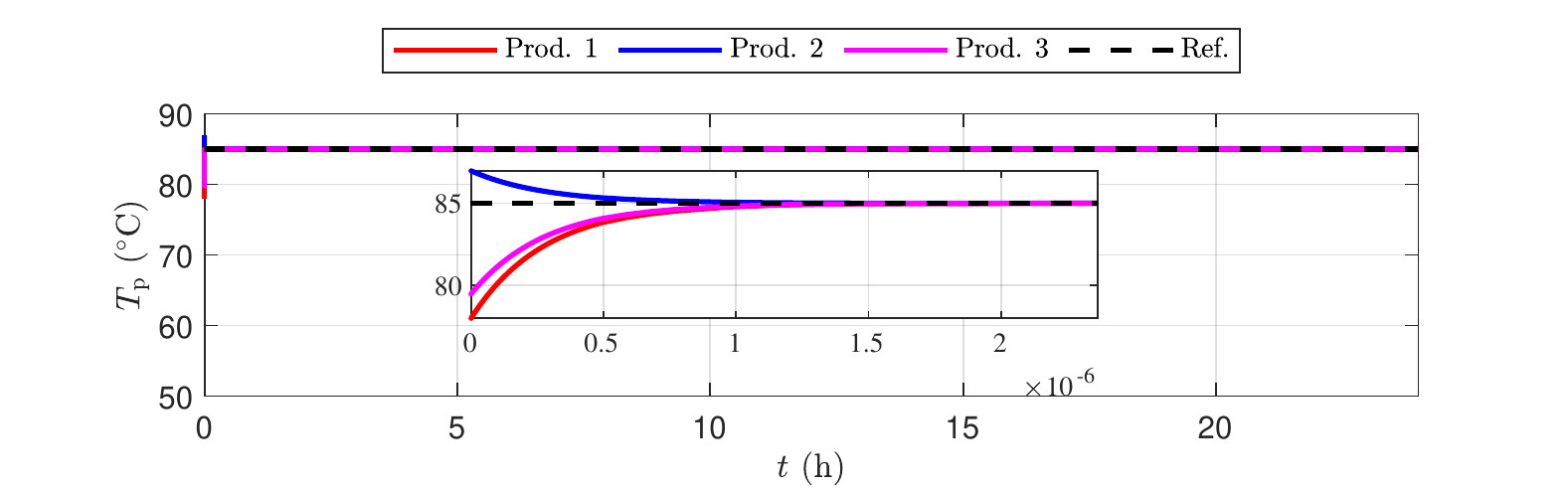}
\caption{~}
\end{subfigure}
\begin{subfigure}{0.49\textwidth}
\includegraphics[width=\linewidth]{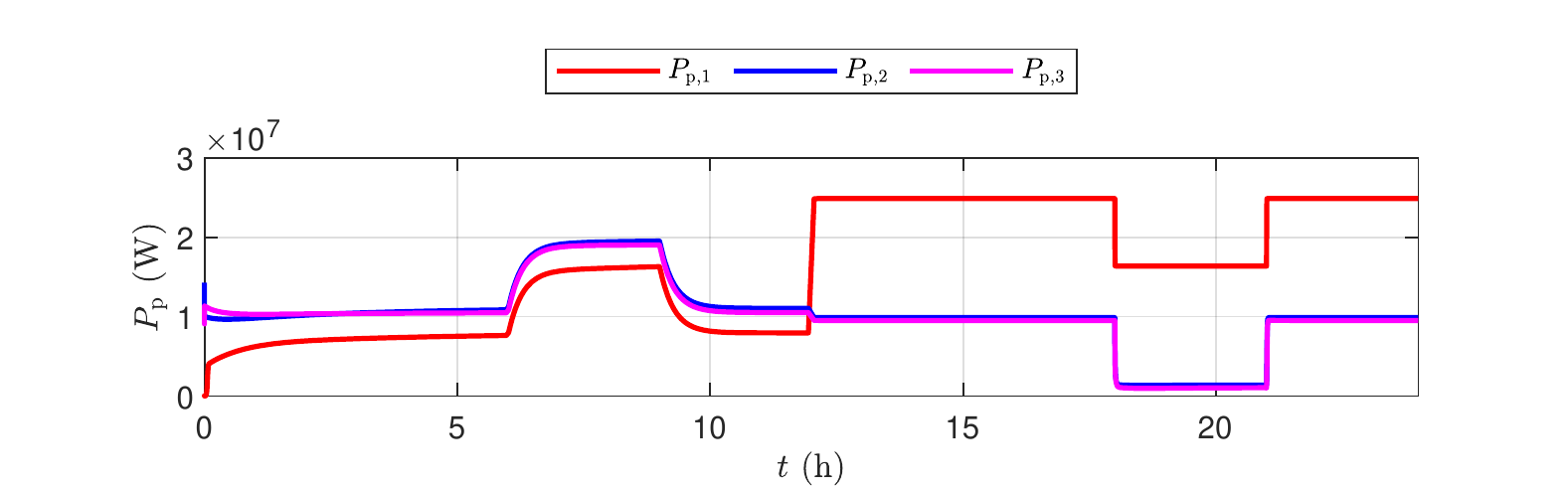}
\caption{~}
\end{subfigure}
\begin{subfigure}{0.49\textwidth}
\includegraphics[width=\linewidth]{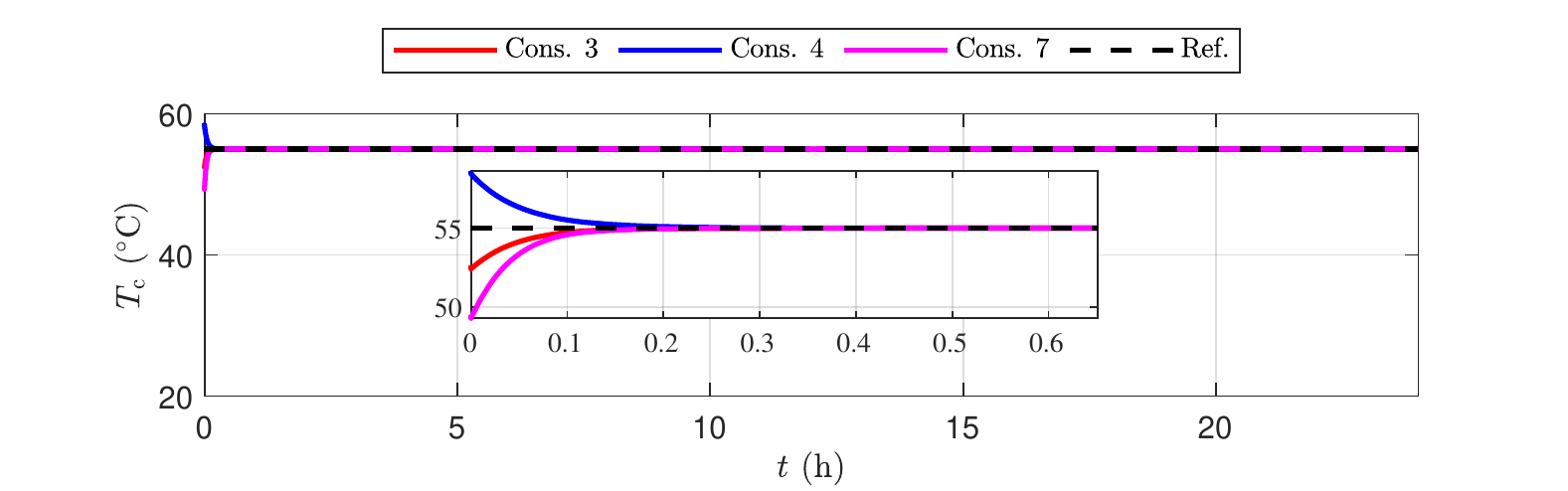}
\caption{~}
\end{subfigure}
\begin{subfigure}{0.49\textwidth}
\includegraphics[width=\linewidth]{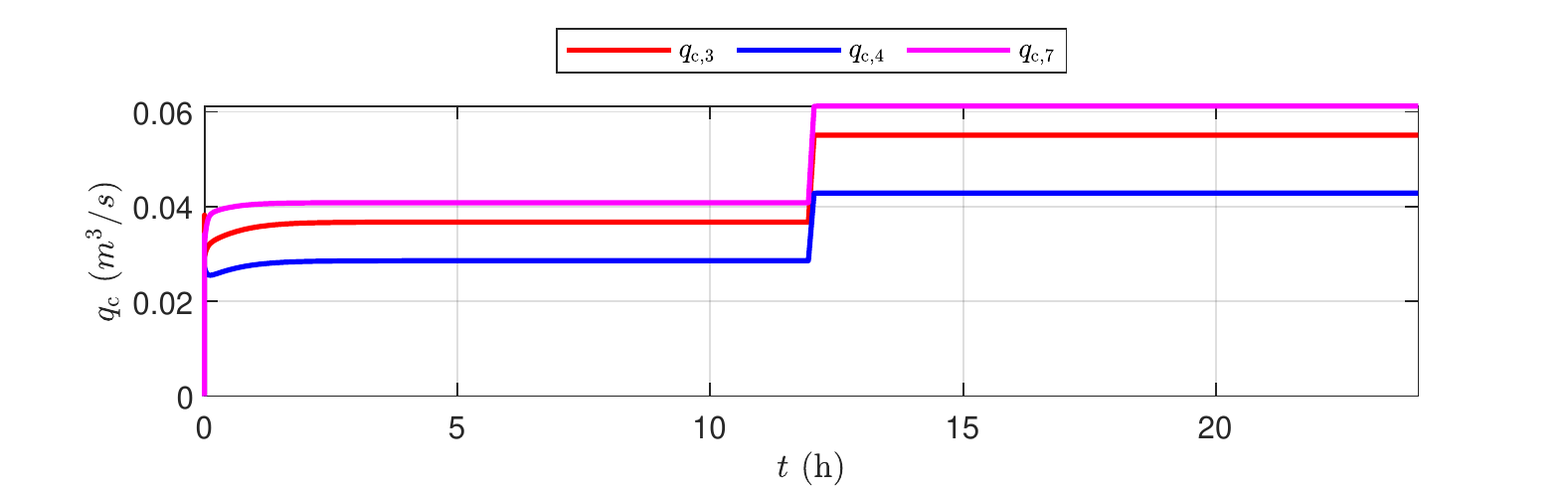}
\caption{~}
\end{subfigure}
\begin{subfigure}{0.49\textwidth}
\includegraphics[width=\linewidth]{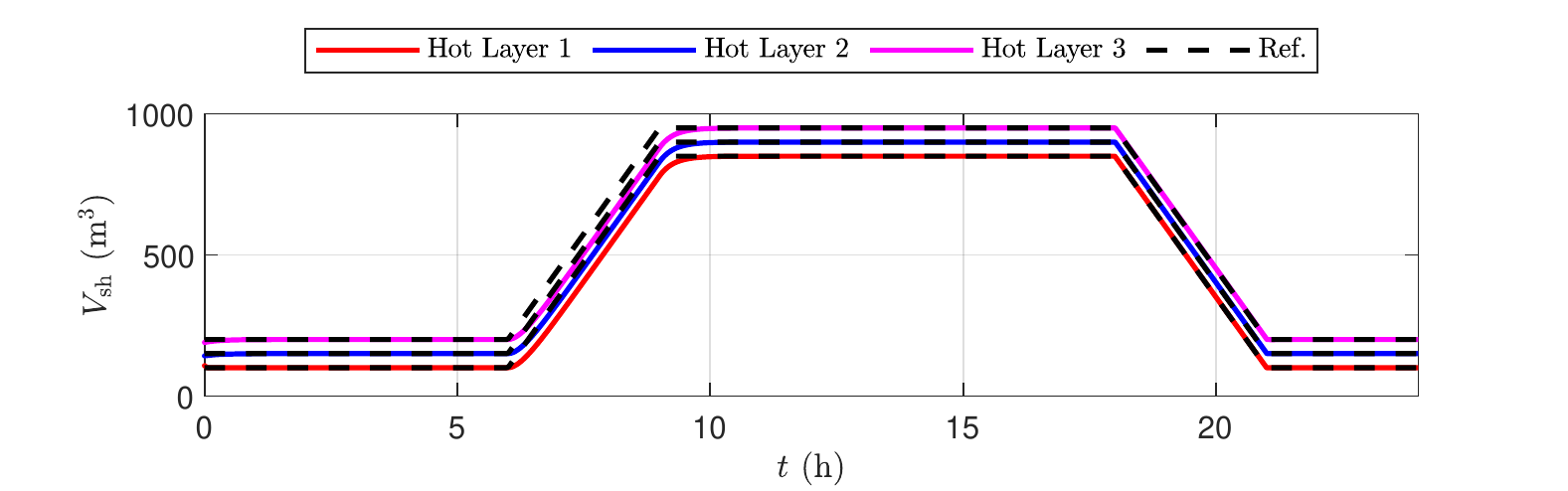}
\caption{~}
\end{subfigure}
\begin{subfigure}{0.49\textwidth}
\includegraphics[width=\linewidth]{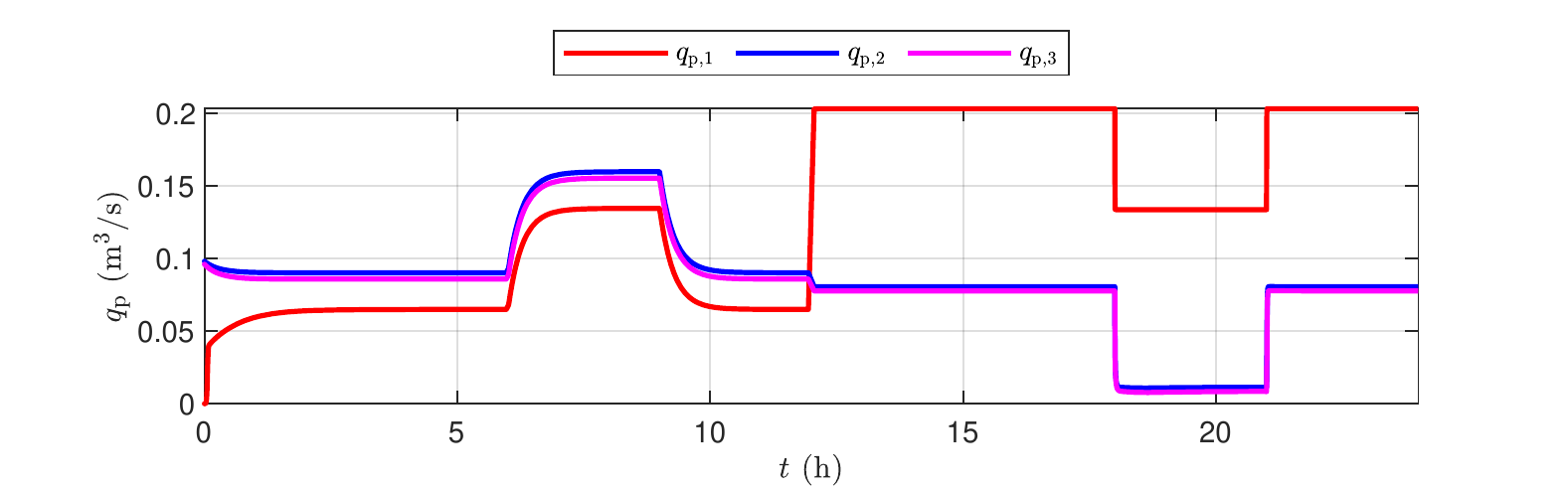}
\caption{~}
\end{subfigure}
\caption{{\bf Left column, top to bottom:} evolution of the producers' temperatures $T_{\mathrm{p},i}$, some of  consumers' outlet temperatures $T_{\mathrm{c},i}$ and the   volume of hot water in the storage tanks $V_{\mathrm{sh},i}$. {\bf Right column, top to bottom:} evolution of the inputs $P_{\mathrm{p},i}$, some of the inputs $q_{\mathrm{c},i}$, and the inputs $q_{\mathrm{p},i}$.}
\label{fig:sims_1}
\end{figure*}

\begin{figure*}
\centering
\begin{subfigure}{0.49\textwidth}
\includegraphics[width=\linewidth]{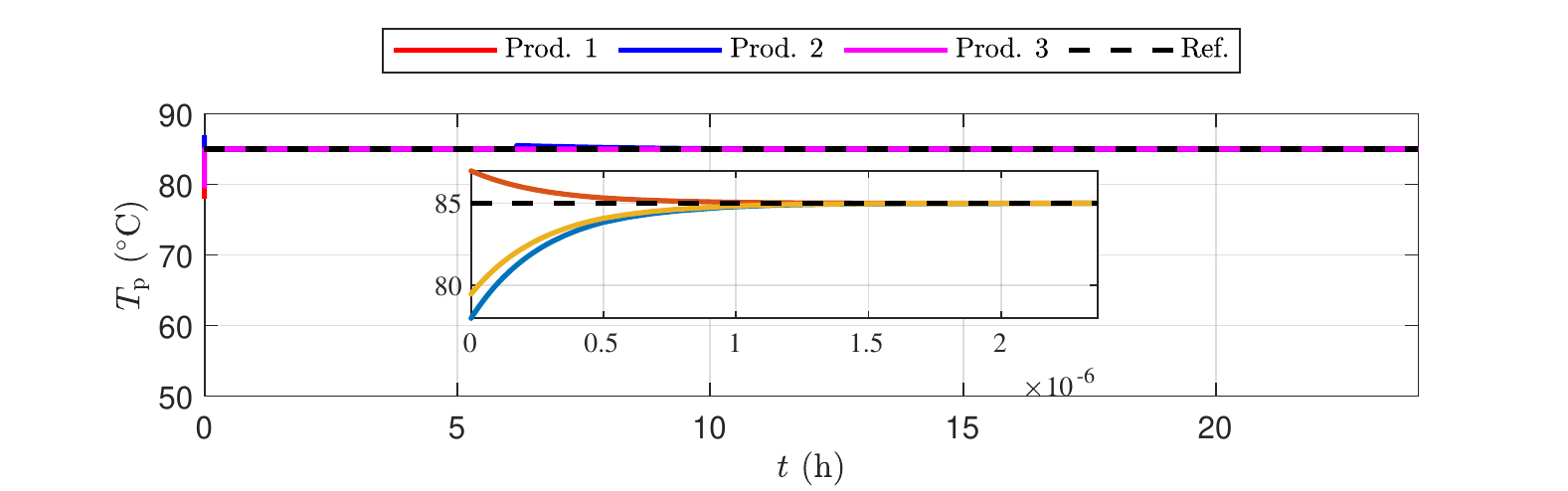}
\caption{~}
\end{subfigure}
\begin{subfigure}{0.49\textwidth}
\includegraphics[width=\linewidth]{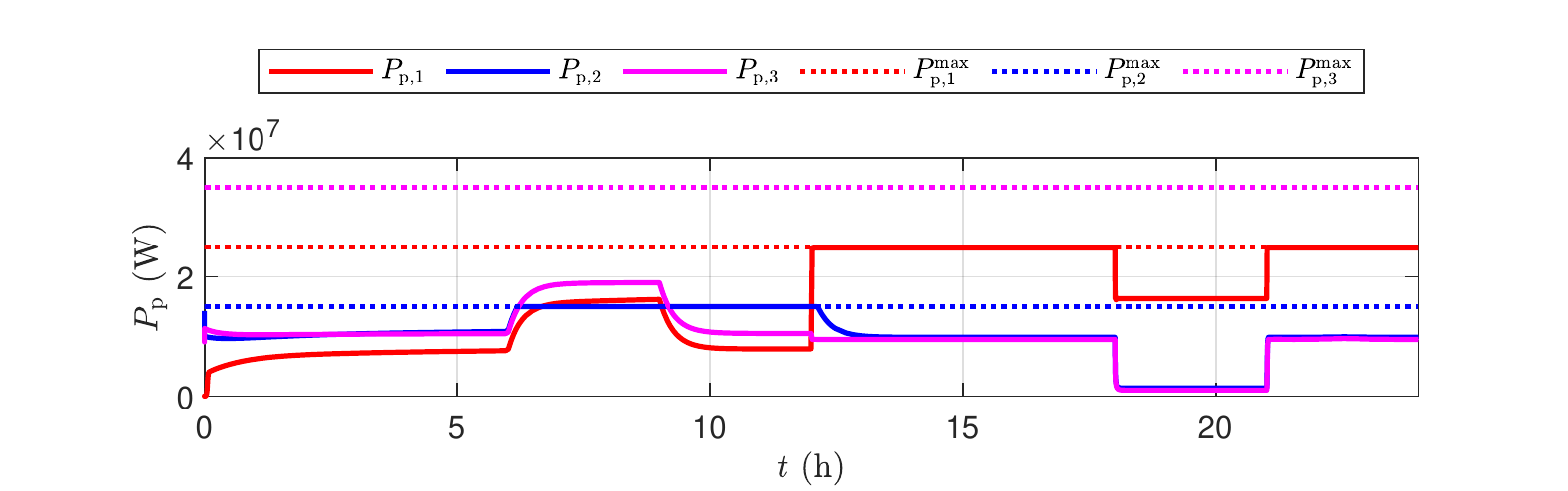}
\caption{~}
\end{subfigure}
\begin{subfigure}{0.49\textwidth}
\includegraphics[width=\linewidth]{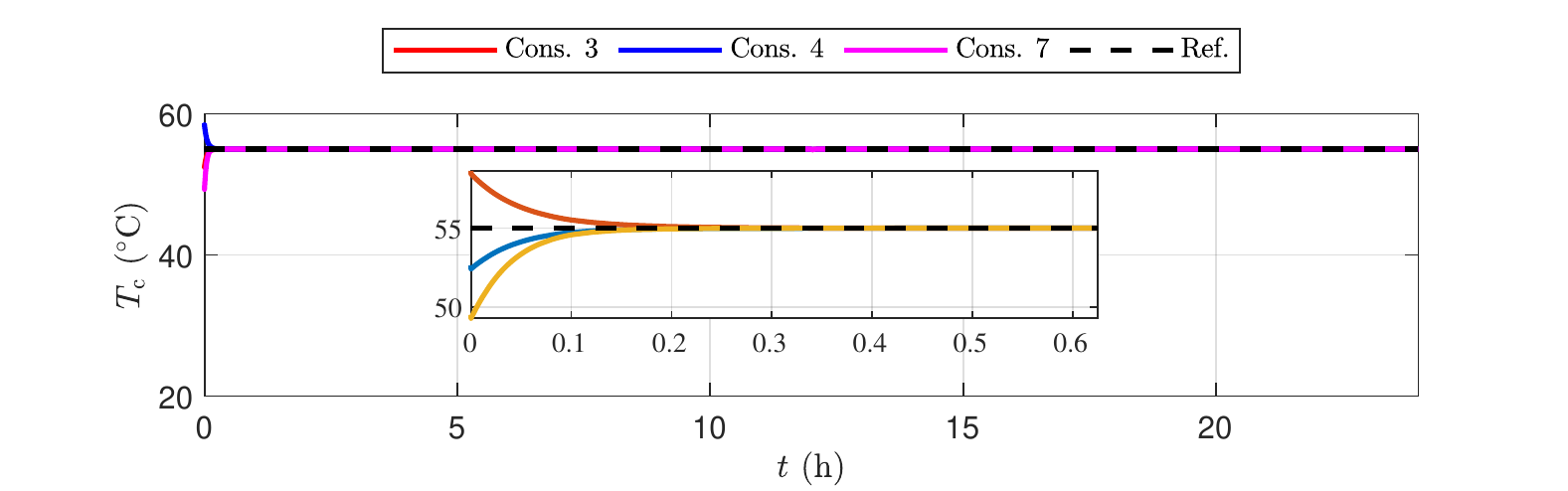}
\caption{~}
\end{subfigure}
\begin{subfigure}{0.49\textwidth}
\includegraphics[width=\linewidth]{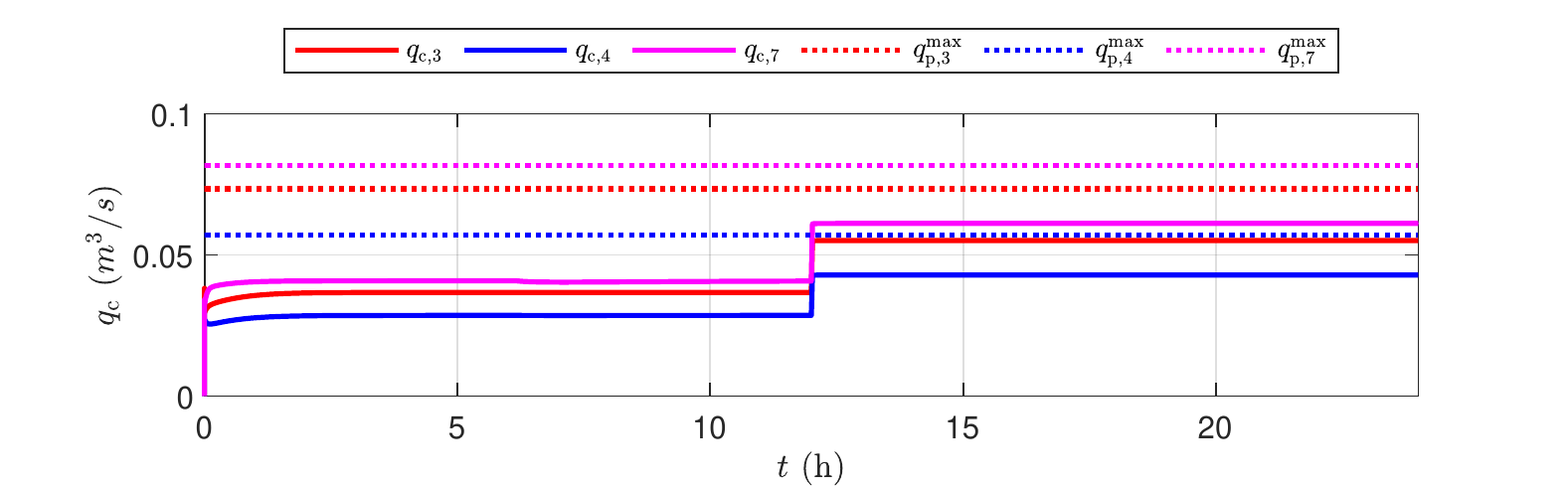}
\caption{~}
\end{subfigure}
\begin{subfigure}{0.49\textwidth}
\includegraphics[width=\linewidth]{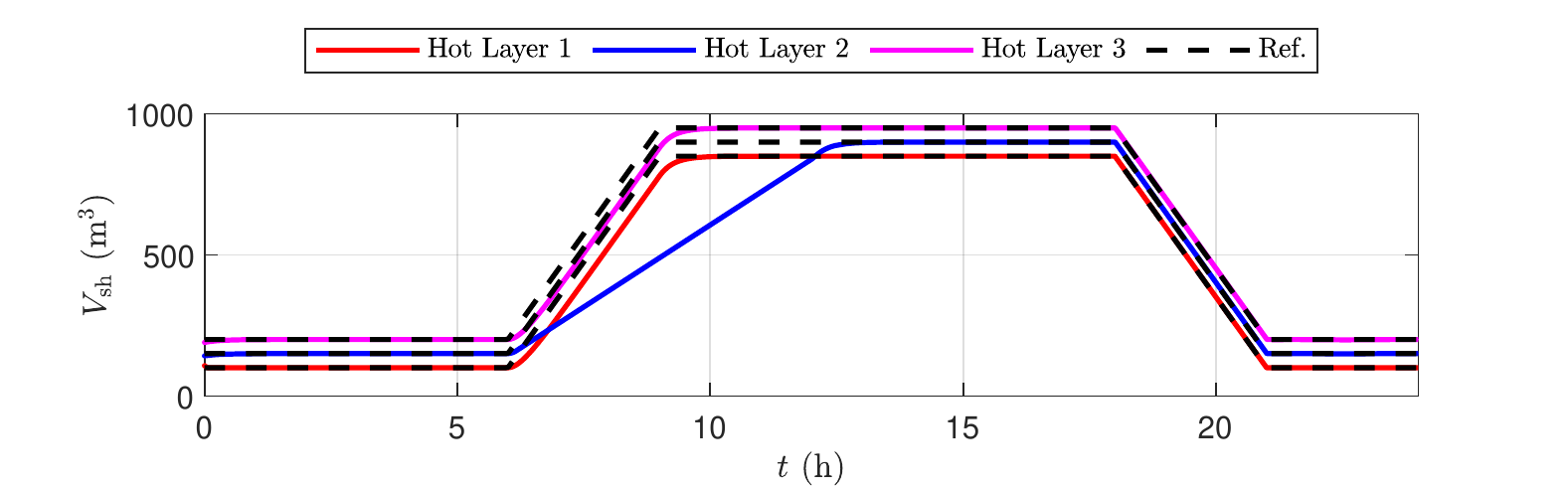}
\caption{~}
\end{subfigure}
\begin{subfigure}{0.49\textwidth}
\includegraphics[width=\linewidth]{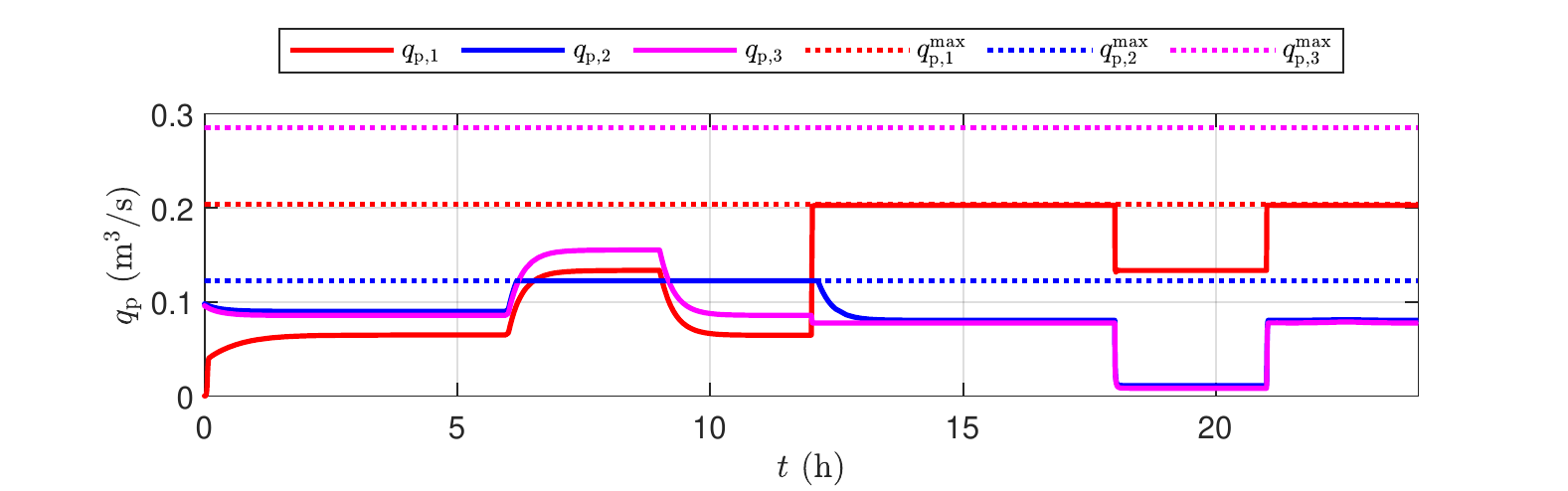}
\caption{~}
\end{subfigure}
\caption{{\bf Left column, top to bottom:} evolution of the producers' temperatures $T_{\mathrm{p},i}$, some of the consumers' outlet temperatures $T_{\mathrm{c},i}$ and the   volume of hot water in the storage tanks $V_{\mathrm{sh},i}$. {\bf Right column, top to bottom:} evolution of the inputs $P_{\mathrm{p},i}$, some of the inputs $q_{\mathrm{c},i}$, and the inputs $q_{\mathrm{p},i}$. The scenario is the same as for the results in Fig.~\ref{fig:sims_1}, but considering saturation to the control inputs.}
\label{fig:sims_sat_exp1}
\end{figure*}

\begin{figure*}
\centering
\begin{subfigure}{0.49\textwidth}
\includegraphics[width=\linewidth]{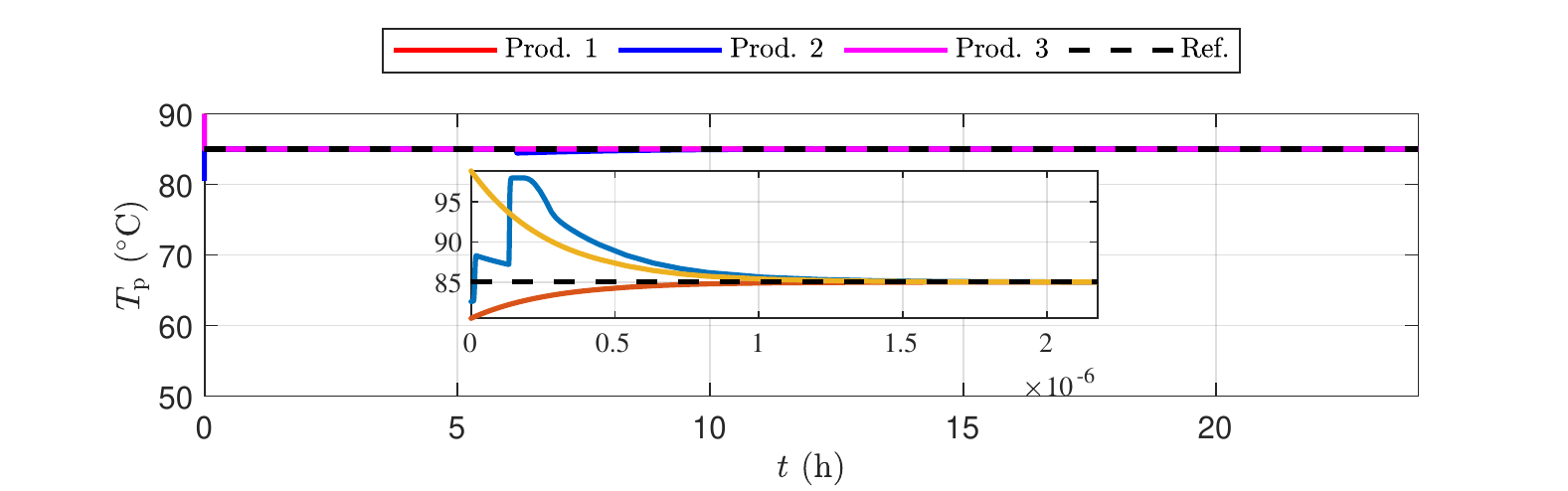}
\caption{~}
\end{subfigure}
\begin{subfigure}{0.49\textwidth}
\includegraphics[width=\linewidth]{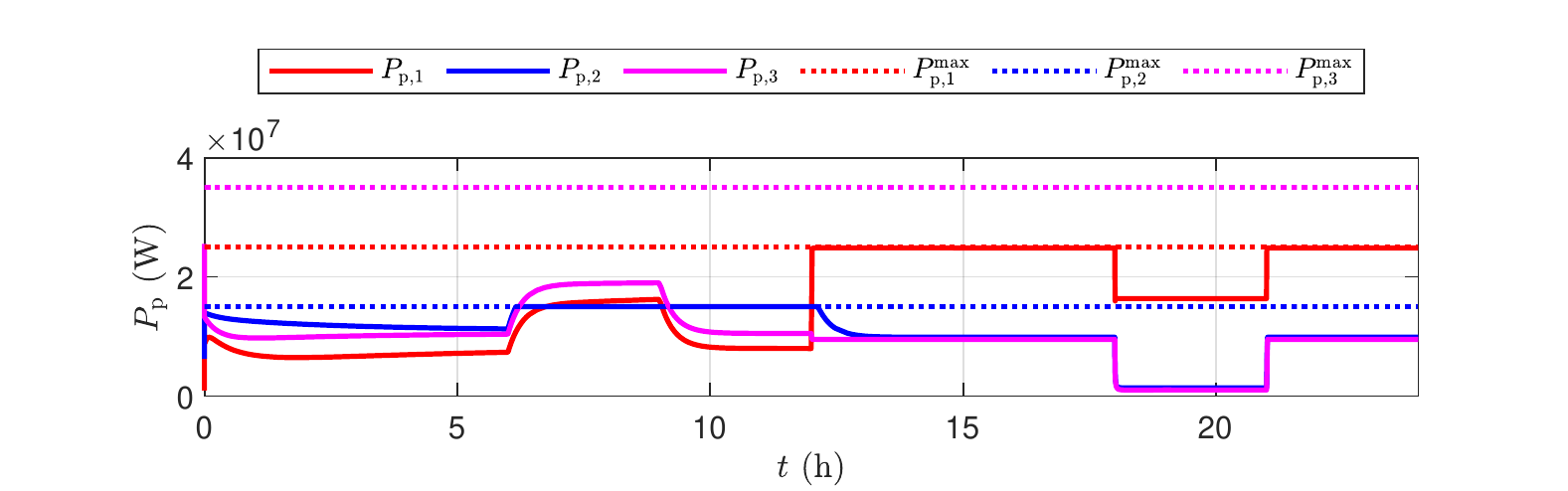}
\caption{~}
\end{subfigure}
\begin{subfigure}{0.49\textwidth}
\includegraphics[width=\linewidth]{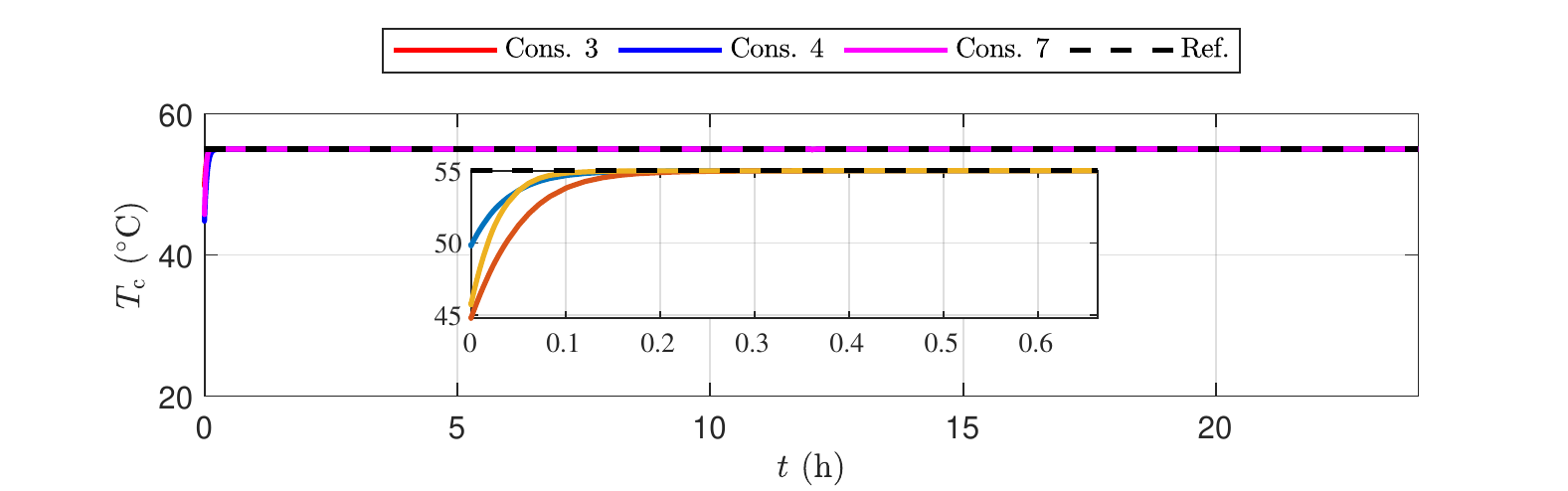}
\caption{~}
\end{subfigure}
\begin{subfigure}{0.49\textwidth}
\includegraphics[width=\linewidth]{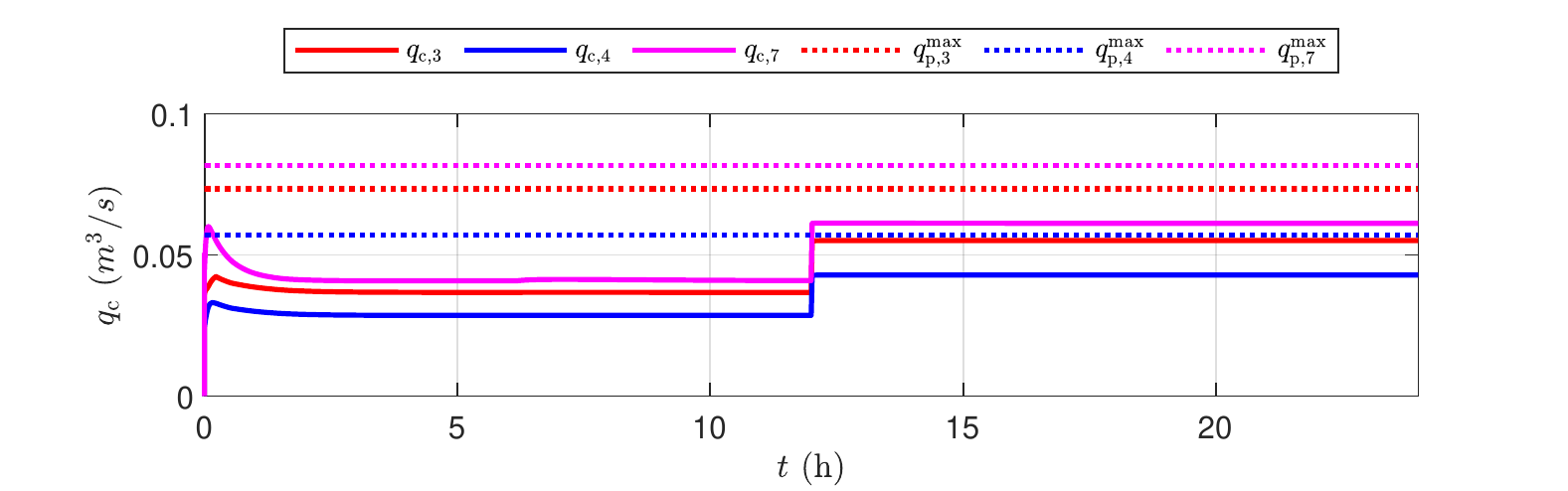}
\caption{~}
\end{subfigure}
\begin{subfigure}{0.49\textwidth}
\includegraphics[width=\linewidth]{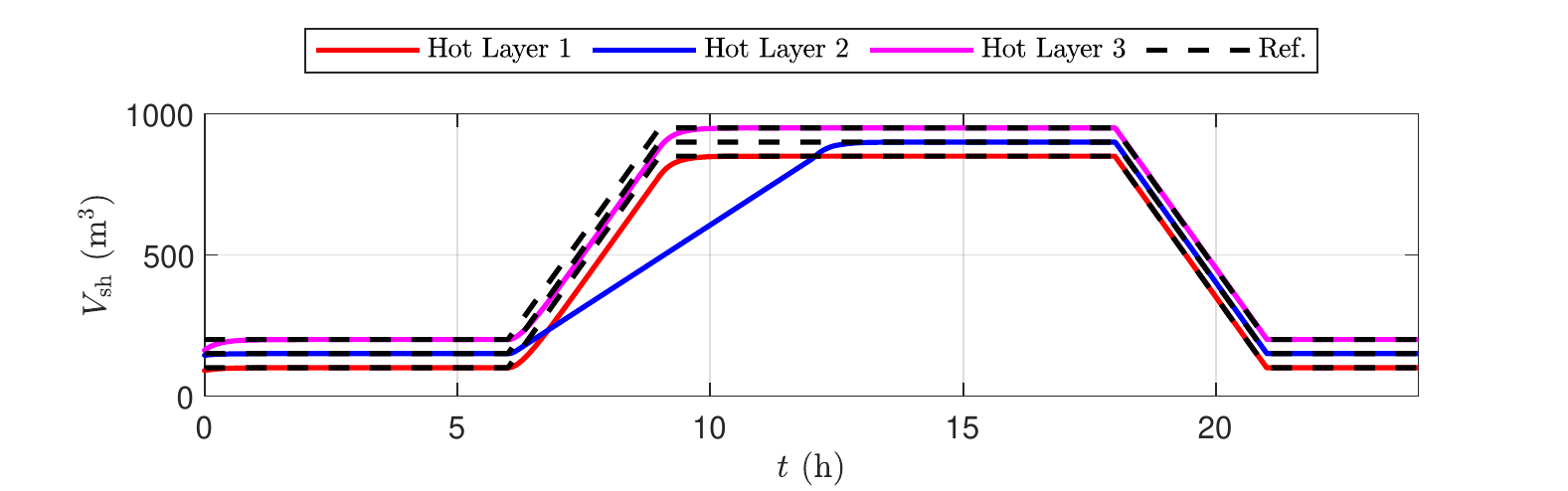}
\caption{~}
\end{subfigure}
\begin{subfigure}{0.49\textwidth}
\includegraphics[width=\linewidth]{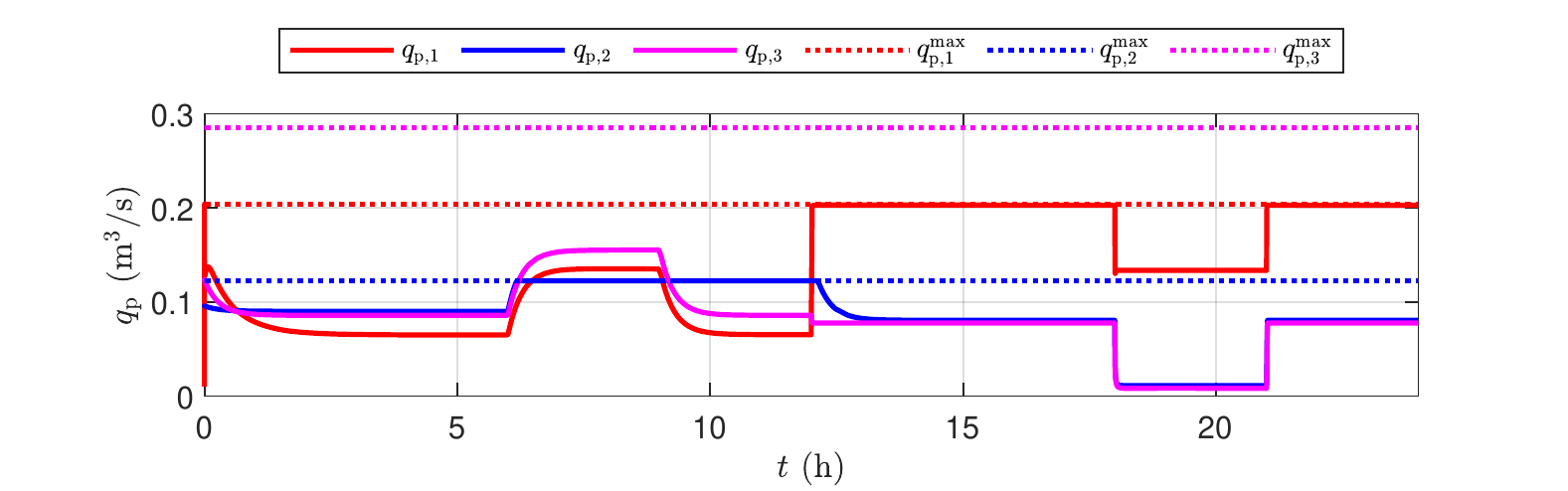}
\caption{~}
\end{subfigure}
\caption{Sample of additional numerical simulations we performed taking initial conditions that further deviate from a system equilibrium (up to 25\% deviation) and considering input constraints. {\bf Left column, top to bottom:} evolution of the producers' temperatures $T_{\mathrm{p},i}$, some of  consumers' outlet temperatures $T_{\mathrm{c},i}$ and the   volume of hot water in the storage tanks $V_{\mathrm{sh},i}$. {\bf Right column, top to bottom:} evolution of the inputs $P_{\mathrm{p},i}$, some of the inputs $q_{\mathrm{c},i}$, and the inputs $q_{\mathrm{p},i}$.}
\label{fig:sims_sat_exp2}
\end{figure*}

\section{Concluding remarks}
Done
In this letter, we have addressed  producer supply and consumer return temperature control in multi-producer DH systems through the design of novel decentralized controllers that also consider the regulation of the amount of hot  water of multiple, distributed storage tanks. The design is complemented with  a Lyapunov theory-based closed-loop stability analysis, from which convergence of the variables of interest is guaranteed. 
{\color{black}Extensions to this work  we are currently investigating include:  time-varying heat demand profiles (c.f., \cite{Scholten_tcst_2015});   more detailed consumer models (c.f.,  \cite{Alisic2019}); fair energy distribution \cite{vandermeulen_control_review_18}; and input saturation \cite{Scholten2017}.}




%
%


%

\end{document}